\newcounter{subsectionsinica}
\newcommand{\sectionsinica}[2]{
\setcounter{subsectionsinica}{0}
\refstepcounter{chapter} %
\label{#2} %
\par\noindent {\normalfont \Large\bfseries  \arabic{chapter}.  #1\\}
}
\newcommand{\subsectionsinica}[1]{
\stepcounter{subsectionsinica} %
\par\noindent {\bf \arabic{chapter}.\arabic{subsectionsinica} #1\\}
}
\def\isblinded{false} %
\newcommand{\citealp}[1]{\citeNP{#1}}
\newcommand{\citet}[1]{\citeA{#1}}
\newcommand{\citep}[1]{\cite{#1}}
\def\ba#1\ea{\begin{align*}#1\end{align*}} %
\def\banum#1\eanum{\begin{align}#1\end{align}} %
\newtheorem{lemma}{Lemma}
\newtheorem{theorem}[lemma]{Theorem}
\newtheorem{definition}[lemma]{Definition}
\newcounter{construction}
\newtheorem{construction}{Construction}
\newcommand{\ulesqed}{\hfill\smiley}
\newenvironment{proof}{\par\noindent{\em Proof. }}{\hfill\ulesqed\\[2mm]}
\newcommand{\RR}{\mathbb{R}} %
\newcommand{\R}{\RR}
\renewcommand{\epsilon}{\ensuremath{\varepsilon}}
\renewcommand{\Pr}{\pi}
\newcommand{\rhoperp}{{\rho_\perp}}
\newcommand{\rperp}{{r_\perp}}
\newcommand{\rgeo}{\ensuremath{R_{\text{geo}}}}
\newcommand{\rfieller}{\ensuremath{R_{\text{Fieller}}}}
\newcommand{\rcons}{\ensuremath{R_{\text{cons}}}}
\newcommand{\rgen}{\ensuremath{R_{\text{gen}}}}
\begin{document}
\markright{
} 
\markboth{\hfill{\footnotesize\rm ULRIKE VON LUXBURG AND VOLKER H. FRANZ
}\hfill}
{\hfill {\footnotesize\rm CONFIDENCE SETS FOR RATIOS} \hfill}
\renewcommand{\thefootnote}{}
$\ $\par
\fontsize{10.95}{14pt plus.8pt minus .6pt}\selectfont
\vspace{0.8pc}
\centerline{\large\bf A Geometric Approach to Confidence Sets for Ratios: }
\vspace{2pt}
\centerline{\large\bf Fieller's Theorem, Generalizations, and Bootstrap}
\vspace{.4cm}
\centerline{Ulrike von Luxburg and Volker H. Franz}
\vspace{.4cm}
\centerline{\it Max Planck Institute for Biological Cybernetics,
  T{\"u}bingen, Germany}
\vspace{2pt}
\centerline{\it 
Justus--Liebig--Universit{\"a}t Giessen, Germany}
\vspace{.55cm}
\fontsize{9}{11.5pt plus.8pt minus .6pt}\selectfont

\begin{quotation}
\noindent {\it Abstract:} 
We present a geometric method to determine confidence sets for
the ratio $E(Y)/E(X)$ of the means of random variables $X$ and
$Y$. This method reduces the problem of constructing confidence sets
for the ratio of two random variables to the problem of constructing
confidence sets for the means of one-dimensional random variables. It
is valid in a large variety of circumstances.  In the case of normally
distributed random variables, the so constructed confidence sets
co\-incide with the standard Fieller confidence sets. Generalizations
of our construction lead to definitions of exact and conservative confidence sets for
very general classes of distributions, provided the joint expectation of
$(X,Y)$ exists and the linear combinations of the form $aX + bY$ are
well-behaved. Finally, our geometric method allows to derive a very
simple bootstrap approach for constructing conservative confidence sets
for ratios which perform favorably in certain situations, in
particular in the asymmetric heavy-tailed regime. 
\par

\end{quotation}\par

\fontsize{10.95}{14pt plus.8pt minus .6pt}\selectfont
\sectionsinica{Introduction}{sec-intro} %
In many practical applications we encounter the problem of estimating
the ratio of two random variables $X$ and $Y$.  This could, for
example, be the case if we want to know how large one quantity is
relative to the other, or if we want to estimate at which position a
regression line intersects the abscissa (e.g.,
\citet{Miller_86,Buonaccorsi_01}; see also
\ifx\isblinded\istrue
BLINDED REFERENCE
\else
\citet{Franz_fiellersub} 
\fi
for many references to practical
studies involving ratios).
While it is straightforward to construct an estimator for $E(Y)/E(X)$
by dividing the two sample means of $X$ and $Y$, it is not obvious how
confidence regions for this estimator can be defined. In the case
where $X$ and $Y$ are jointly normally distributed, an exact solution
to this problem has been derived by
\citet{Fieller32,Fieller40,Fieller44,Fieller54}; for more detailed
discussions see \citet{Kendall61}, \citet{Finney_78},
\citet{Miller_86}, and \citet{Buonaccorsi_01}.  But in applications,
practitioners often do not use Fieller's results and apply ad-hoc
methods instead.  Perhaps the main reason is that Fieller's  confidence
regions do not look like "normal" confidence intervals and are often
perceived as counter-intuitive. In benign cases they form an interval
which is not symmetric around the estimator, while in worse cases the
confidence region consists of two disjoint unbounded intervals, or
even of the whole real line.
Especially the latter case is highly unusual as the confidence region does not exclude any value at all --- certainly
not what one would expect from a well--behaved confidence
region. However, different researchers
\citep{Gleser_Hwang_87,Koschat_87,Hwang_95} have shown that any method
which is not able to generate such unbounded confidence limits for a
ratio leads to arbitrary large deviations from the intended
confidence level. For a discussion of the conditional confidence level, given
that the Fieller confidence limits are bounded,  see
\citet{Buonaccorsi_Iyer_84}.\\

There have been several approaches to present Fieller's methods in a
more intuitive way.  Especially remarkable are the ones which rely on
geometric arguments. \citet{Milliken82} attempted a geometric proof
for Fieller's result in the case where $X$ and $Y$ are independent
normally distributed random variables. Unfortunately, his proof
contained an error which led him to the wrong conclusion that
Fieller's confidence regions were too conservative.  Later, his proof
was corrected and simplified by \citet{Guiard89}. He considers the
case that $X$ and $Y$ are jointly normally distributed according to
$(X,Y) \sim N(\mu, \sigma^2 V)$, where the mean $\mu$ and the scale
$\sigma^2$ of the covariance are unknown, but the covariance matrix
$V$ is known. Guiard presents a geometric construction of confidence
regions, and then shows by an elegant comparison to a likelihood ratio
test that the constructed regions are exact and coincide with
Fieller's solution. The drawback of his proof is that it only works in
the case where the covariance matrix $V$ is known, which  in practice is usually
not the case. Moreover, although the confidence sets are
constructed by a geometric procedure, Guiard's proof relies on
properties of the likelihood ratio test and does not give geometric
insights into why the construction is correct. \\

In this article we derive several simple geometric constructions for
exact confidence sets for ratios. Our construction coincides with
Guiard's if $(X,Y)$ are normally distributed with known
covariance matrix $V$, but it is also valid in the case where $V$ is
unknown. Our proof techniques are remarkably simple and purely
geometric. The understanding gained by our  approach  then allows to
extend the geometric construction from normally distributed random variables to
more general classes of distributions.  While it is relatively
straightforward to define confidence sets for elliptically symmetric
distributions, another extension  leads to a completely new
construction of confidence sets for ratios which is exact for a very
large class of distributions. Essentially, the only assumptions we
have to make is that the means of $X$ and $Y$ exist and that it is
possible to construct exact confidence sets for the mean of linear
combinations of the form $a_1 X + a_2 Y$. To our knowledge, this is
the first definition of {\em exact} confidence sets for ratios of very
general classes of distributions. Finally, using the geometric
insights also leads to a simple bootstrap procedure for confidence
sets for ratios. This method is particularly well-suited for highly
asymmetric and heavy-tailed distributions. \\

\subsectionsinica{Definitions and notation} %
We will always consider the following situation. We are given a sample
of $n$ pairs $Z_i:= (X_i,Y_i)_{i=1,...,n}$ drawn independently
according to some underlying distribution. In the first part we will
always assume that this joint distribution is a 2-dimensional normal distribution $N(\mu, C)$
with mean $\mu=(\mu_1,\mu_2)$ and covariance matrix $C =
\big( \begin{smallmatrix} c_{11} & c_{12} \\ c_{21} & c_{22}
\end{smallmatrix} \big) $.  We assume
that both $\mu$ and $C$ are unknown. Later we will also study more
general classes of distributions. Our goal will be to estimate the
ratio $\rho := \mu_2/\mu_1$ and construct confidence sets for this ratio. 
To estimate the unknown mean and the covariance matrix
we will use the standard estimators: the means are estimated by
\banum \label{eq-est-mean}
\hat\mu_1 := \frac{1}{n}\sum_{i=1}^n X_i  \;\;\text{and}\;\;
\hat\mu_2 := \frac{1}{n}\sum_{i=1}^n Y_i,
\eanum
and the estimated  covariance matrix $\hat{C} = 
\big( \begin{smallmatrix} \hat{c}_{11} & \hat{c}_{12} \\ \hat{c}_{21} & \hat{c}_{22}
\end{smallmatrix} \big)$ has the entries
\banum \label{eq-est-cov}
&  \hat{c}_{11} := \frac{1}{{n}}\frac{1}{n-1} \sum_{i=1}^n (X_i - \hat\mu_1)^2
\;\; \text{and}\;\;
\hat{c}_{22} := \frac{1}{{n}}\frac{1}{n-1} \sum_{i=1}^n (Y_i - \hat\mu_2)^2 \\
\nonumber &  \hat{c}_{12} := \hat{c}_{21} =\frac{1}{{n}}\frac{1}{n-1} \sum_{i=1}^n (X_i - \hat\mu_1)(Y_i -
\hat\mu_2).
\eanum
Note that we rescaled the estimators $\hat{c_{ij}}$ by $1/{n}$ to
reflect the variability of the estimators $\hat\mu_1$ and
$\hat\mu_2$. This will be convenient later on. As estimator for the
ratio $\rho = \mu_2/\mu_1$ we use 
$\hat\rho := {\hat\mu_2}/{\hat\mu_1}$.
Note that our goal is to estimate $E(Y)/E(X)$ and not $E(Y/X)$. In fact, if $X$
and $Y$ are normally distributed, the
latter quantity does not even exist. 
As in this situation  the estimators $\hat\mu_1$ and
$\hat\mu_2$ are normally distributed as well, we can also see that  the estimator
$\hat\rho$ 
cannot be
unbiased, as its expectation $E(\hat\rho) = E(\hat\mu_2/\hat\mu_1)$
simply does not exist.
For more discussion on the bias of the estimator $\hat\rho$ see
\citet{Beale_62,Tin_65,Durbin_59,Rao_81,Miller_86} and
\citet{Dalabehera_Sahoo_95}. \\

For $\alpha \in ]0,1[$, a {\em confidence
set} (or {\em confidence region}) of level $1-\alpha$ for a
parameter $\theta \in \Theta$ 
is defined to be a set $R$ constructed from the sample such
that for all $\theta \in \Theta$ it holds that $P_\theta(\theta \in R)
\geq 1-\alpha$. 
If this statement holds with equality, then the confidence
set $R$ is called {\em exact}, otherwise it is called {\em
  conservative}.  If the statement $P_\theta(\theta \in R) = 1-\alpha$ only
holds in the limit for the sample size $n \to \infty$, the confidence
set $R$ is called {\em asymptotically exact}. A confidence interval
$[l,u]$ is
called {\em equal-tailed} if $P_\theta(\theta < l) = P_\theta(\theta > u)$. It is
called {\em symmetric around} $\hat\theta$ if it has the form
$[\hat\theta -q, \hat\theta +q]$. 
For general background
reading about confidence sets we refer to Chapter 20 of
\citet{Kendall61}, Section 5.2 of \citet{Schervish95}, and Chapter 4
of \citet{ShaTu95} . 
For a real-valued random variable with distribution function $F$ and a
number $\alpha \in ]0,1[$, the $\alpha$-quantile of $F$ is defined as the
smallest number $x$ such that $F(x) = \alpha$. We will denote this quantile
by $q(F,\alpha)$. In the special case where $F$ is induced by the Student-t
distribution with $f$ degrees
of freedom, we will denote the quantile by $q(t_f,\alpha)$.  \\

Many of the geometric arguments in this paper will be based on
orthogonal projections of the
two-dimensional plane to a one-dimensional subspace.
In the two-dimensional plane, we define the line $L_\rho$ through the origin with
slope $\rho$ and the line $L_\rhoperp$ orthogonal to $L_\rho$ by 
\ba
&L_{\rho} := \{(x,y) \in \RR^2 | \; y = \rho x\} \\
& L_\rhoperp := \{ (x,y) \in \RR^2 | \; y = (-1/\rho) x \}. 
\ea
For an arbitrary unit vector 
$a = (a_1,a_2)'  \in \RR^2$ let 
\ba
\Pr_a:\RR^2 \to \RR, \; x \mapsto a'x = a_1x_1 + a_2x_2
\ea
be the orthogonal projection of the two-dimensional plane on the
one-di\-men\-sio\-nal subspace spanned by $a$, that is on the line
$L_r$ with slope $r = a_2/a_1$. We will also 
write $\pi_r$ for the projection on $L_r$, and
$\pi_\rperp$ for the projection on the line $L_\rperp$. \\

 Let $C
\in \RR^{2 \times 2}$ be a covariance matrix (i.e., positive definite
and symmetric) with eigenvectors $v_1,v_2 \in \RR^2$ and
eigenvalues $\lambda_1,\lambda_2 \in \RR$. Consider the ellipse
centered at some point $\mu \in \RR^2$ such that its principal
axes have the directions of $v_1,v_2$ and have lengths $q
\sqrt{\lambda_1}$ and $q\sqrt{\lambda_2}$ for some $q>0$. We
denote this ellipse by $E(C,\mu,q)$ and call it the {covariance
ellipse} corresponding to $C$ centered at $\mu$ and scaled with
parameter $q$. This ellipse can also be described as the set of
points $z \in \RR^2$ which
satisfy the ellipse equation $(z-\mu)'C^{-1}(z-\mu) = q^2$. \\

\sectionsinica{Exact confidence regions for
  normally distributed random variables}{sec-exact}
Let us start with a few geometric
observations.  For given $\mu = (\mu_1,\mu_2) \in \RR^2$, the ratio
$\rho = \mu_2/\mu_1$ can be depicted as the slope of the line $L_\rho$
in the two-dimensional plane which passes both through the origin and
the point $(\mu_1,\mu_2)$.  Similarly, the estimated ratio $\hat\rho$
is given as the slope of the line through the origin and the point
$\hat\mu = (\hat\mu_1, \hat\mu_2)$ (cf. Figure
\ref{fig-fieller-intro}).
\begin{figure}[!bt]
  \begin{center}
 \includegraphics[width=0.8\textwidth,height=0.3\textheight]{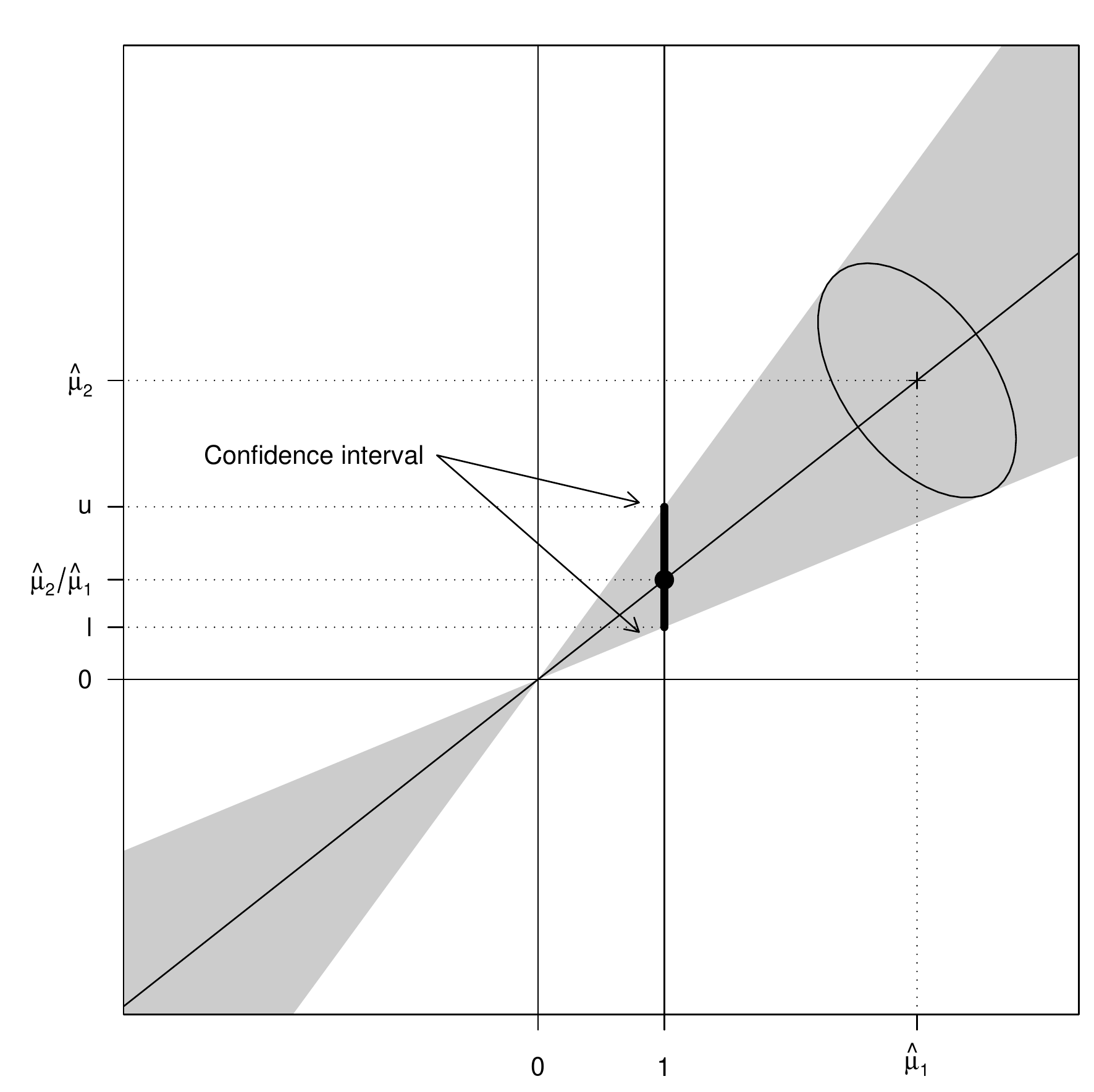}
\caption{Geometric principles. The ratio $\hat\mu_2/\hat\mu_1$ can be depicted as the slope of
the line through the points $(0,0)$ and $(\hat\mu_1,\hat\mu_2)$. The
ratios inside an interval $[l,u]$ correspond to the slopes of all
lines in the wedge spanned by the lines with slopes $l$ and
$u$. For a given wedge, the corresponding interval $[l,u]$ can be
obtained by intersecting the wedge with the line $x=1$. %
}
  \label{fig-fieller-intro}
  \end{center}
\end{figure}
Assume that we are given a confidence interval $R=[l,u] \subset \RR$ that
contains the estimator $\hat\rho$.   The lower and upper limits of this interval
correspond to the slopes of the two lines passing
through the origin and the points $(1,l)$ and $(1,u)$,
respectively. Let $W$ denote the wedge enclosed by those two
lines. The slopes of the lines inside the wedge 
exactly correspond to the ratios inside the interval $R$. The other
way round,  the interval $[l,u]$ can be reconstructed from the
wedge as the intersection of the wedge with the line $x=1$
(cf. Figure \ref{fig-fieller-intro}). \\

\subsectionsinica{Geometric construction of exact confidence sets} 
In the following we want to construct an appropriate wedge containing
$\hat\mu$ such that the region obtained by intersection with the line
$x=1$ yields an exact confidence region for $\rho$ of level
$1-\alpha$. This wedge will be constructed as the smallest wedge
containing a certain ellipse around the estimated mean
$(\hat\mu_1,\hat\mu_2)$. We will see that depending on the position of
the ellipse, we have to distinguish between three different cases
called ``{\em bounded}'', ``{\em exclusive unbounded}'', and {\em
  ``completely unbounded}''.  For an illustration see Figure
\ref{fig-fieller-fullexample}.

\begin{construction} \label{construction-fieller-geometric}
{\bf (Geometric construction of exact confidence regions $\rgeo$  for
  $\boldsymbol{\rho}$ in case of normal distributions)}
\begin{enumerate}
\item Estimate the means $\hat\mu_1$ and $\hat\mu_2$ according to
  Equation \eqref{eq-est-mean}, the covariance matrix $\hat{C}$ according to
  Equation \eqref{eq-est-cov}.
\item Define the real number $q$ as $q(t_{n-1},1-{\alpha}/{2})$. That
  is, $q$ is the $(1-{\alpha}/{2})$-quantile of the Student-$t$
  distribution with $n-1$ degrees of freedom.
\item In the two-dimensional plane, plot the ellipse
  $E = E(\hat{C},\hat\mu,q)$ centered at the estimated joint
  mean  $\hat\mu=(\hat\mu_1,\hat\mu_2)$, with shape according to the estimated
  covariance matrix $\hat{C}$, and scaled by the number $q$
  computed in the step before.
\item Depending on the position of the ellipse, distinguish between the
  following cases (see Figure \ref{fig-fieller-fullexample}).
  \begin{enumerate}
  \item If $(0,0) \text{ not inside } E$, construct the two tangents
    to $E$ through the origin $(0,0)$ and let $W$ be the wedge
    enclosed by those tangents.  Define the region $\rgeo$ as the
    intersection of $W$ with the line $x=1$. Depending on whether the
    $y$-axis lies inside $W$ or not, this results in an exclusive
    unbounded or a bounded confidence region.
     \item
       If  $(0,0) \text{ inside } E$, choose the confidence region as  $\rgeo=
       ]-\infty,\infty[$ (completely unbounded case). 
     \end{enumerate}
\end{enumerate}
\end{construction}

Let us give some 
intuitive reasons why the three cases and the form of the confidence
sets make sense. In the first case, the 
denominator $\hat \mu_1$ is significantly different from $0$. Here 
we do not expect any difficulties from dividing by $\hat \mu_1$ as the
denominator is ``safely away from 0''. Our uncertainty about the value
of $\rho$ is restricted to some interval around $\rho$, which
corresponds to the bounded case. To relate this to the geometric
construction, observe that the denominator is significantly different
from 0 if and only if the ellipse $E$ does not touch the $y$-axis.  
The situation is more complicated if the denominator is not
significantly different from $0$, that is the ellipse intersects with
the $y$-axis. As we divide by a number potentially close to $0$, we
cannot control the absolute value of the outcome, which
might become arbitrarily large, nor can we be sure about its sign.
Hence, regions of the form $]-\infty,c_1]$ and $[c_2, \infty[$ should
be part of the confidence region.  If, additionally, we are confident
that the numerator is not too small, then we expect that $\rho$ is
not very close to 0.  This is reflected by the ``exclusive
unbounded case''.  If, on the other hand, the numerator is not
significantly different from 0, then we cannot guarantee for anything:
when dividing $0/0$ any outcome is conceivable. Here the confidence set
should coincide with the whole real line, which is the ``completely
unbounded'' case.
\\

\begin{figure}[!bt]
  \begin{center}
 \includegraphics[height=0.3\textheight,width=0.7\textwidth]{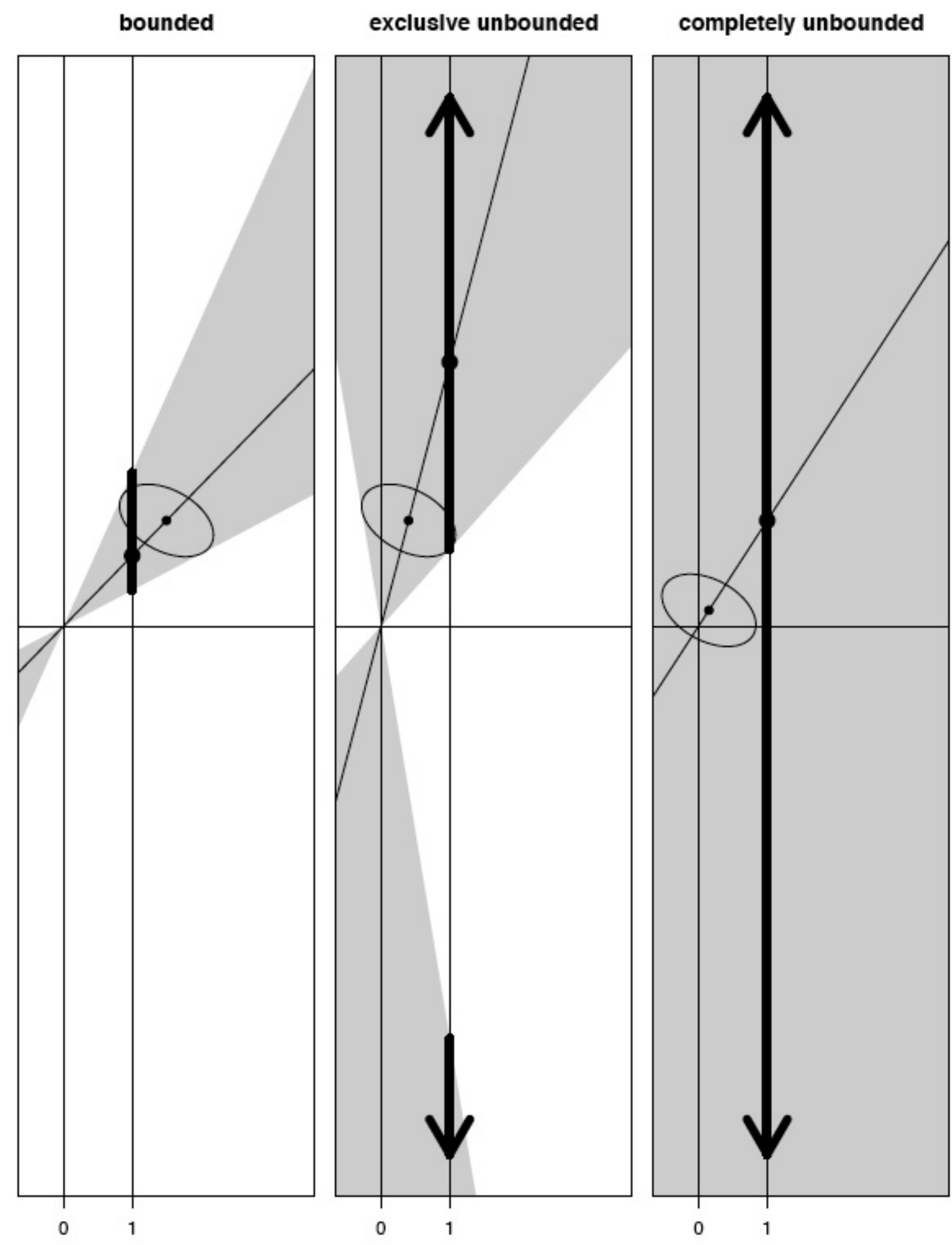}
\caption{The three cases in the construction of the confidence set
  $\rgeo$: the bounded case where the ellipse does
  not intersect the $y$-axis, the exclusive
  unbounded case, where the ellipse intersects the $y$-axis but does
  not contain the origin, and the completely unbounded case,
  where the ellipse contains the origin. }
  \label{fig-fieller-fullexample}
  \end{center}
\end{figure}

\begin{theorem}[$\rgeo$ is an exact confidence set for $\rho$]\sloppy
  \label{th-we-are-exact}
Let $(X_i,Y_i)_{i=1,...,n}$ be an i.i.d. sample
drawn from the distribution $N(\mu,C)$ with unknown $\mu$ and $C$, and let $\rgeo$ be the
regions constructed according to Construction \ref{construction-fieller-geometric}. Then $\rgeo$
is an exact confidence region of level $1-\alpha$ for $\rho$, that is
 for all $\mu$ and $C$ we have $P(\rho \in \rgeo) = 1-\alpha$.
\end{theorem}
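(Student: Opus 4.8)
The plan is to reduce the two-dimensional coverage statement to a one-dimensional $t$-statistic statement via orthogonal projection onto a cleverly chosen line. The key observation is the following equivalence: the true ratio $\rho$ lies in $\rgeo$ if and only if the point $(1,\rho)$ lies in the wedge $W$ (in the bounded/exclusive cases), which in turn happens if and only if the line $L_\rho$ through the origin with slope $\rho$ meets the ellipse $E = E(\hat C,\hat\mu,q)$ — equivalently, the origin $(0,0)$ and $\hat\mu$ lie on the same side of, or the line $L_\rho$ intersects, the ellipse. So the first step is to prove carefully that, in all three cases of Construction \ref{construction-fieller-geometric}, one has the chain of equivalences
\banum
\rho \in \rgeo \iff L_\rho \cap E \neq \emptyset \iff \hat\mu \in E(C',\cdot) \text{-type condition} \iff \text{a $t$-statistic is} \le q,
\eanum
and the cleanest way to phrase the middle condition uniformly is: $L_\rho$ meets $E$ iff the distance from $\hat\mu$ to the line $L_\rho$, measured in the Mahalanobis metric induced by $\hat C$, is at most $q$.

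The second step is to identify that Mahalanobis distance with a Student-$t$ statistic. Project everything orthogonally onto the line $L_{\rhoperp}$ normal to $L_\rho$ using $\pi_{\rhoperp} = \Pr_a$ with $a$ the unit vector in direction $(-\rho,1)$ (up to normalization). Then $L_\rho$ is exactly the preimage $\pi_{\rhoperp}^{-1}(0)$, and the condition ``$L_\rho$ meets $E$'' becomes ``$0$ lies within $q$ ellipse-radii of $\pi_{\rhoperp}(\hat\mu)$ in the projected (one-dimensional) geometry''. Concretely, set $U_i := \pi_{\rhoperp}(Z_i) = -\rho X_i + Y_i$ (times the normalization constant). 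Because $(X_i,Y_i)$ is bivariate normal, the $U_i$ are i.i.d.\ univariate normal with mean $\pi_{\rhoperp}(\mu) = -\rho\mu_1 + \mu_2 = 0$ — this is the crucial point: by definition of $\rho = \mu_2/\mu_1$, the true mean projects to the origin. One checks that $\pi_{\rhoperp}(\hat\mu) = \bar U$ is the sample mean of the $U_i$, and that the ellipse $E$, being scaled by $\hat C$ and $q$, projects so that the relevant condition is exactly $|\bar U| \le q \cdot \widehat{\mathrm{sd}}(\bar U)$, where $\widehat{\mathrm{sd}}(\bar U)^2 = \tfrac{1}{n(n-1)}\sum_i (U_i - \bar U)^2 = a'\hat C a$. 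In other words $\rho \in \rgeo$ iff $|T| \le q$ where $T = \bar U / \widehat{\mathrm{sd}}(\bar U)$ is the one-sample $t$-statistic of the $U_i$.

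The third step is then immediate: since $U_1,\dots,U_n$ are i.i.d.\ $N(0,\sigma_U^2)$ with mean exactly $0$, the statistic $T$ has a Student-$t$ distribution with $n-1$ degrees of freedom, so $P(|T|\le q) = P(|t_{n-1}| \le q(t_{n-1},1-\alpha/2)) = 1-\alpha$ by the choice of $q$ in step 2 of the construction. Combining with the equivalence from step one yields $P(\rho \in \rgeo) = 1-\alpha$ for all $\mu, C$. I expect the main obstacle to be step one: verifying the geometric equivalences uniformly across the bounded, exclusive-unbounded, and completely-unbounded cases, and in particular checking the boundary/degenerate configurations (origin on the ellipse, $\hat\mu_1 = 0$, the wedge containing the $y$-axis) so that ``$\rho \in \rgeo$'' and ``$L_\rho$ meets $E$'' really do agree as events up to a null set. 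Once the projection picture is set up correctly, identifying the Mahalanobis condition with $|T|\le q$ is a routine linear-algebra computation (the projection of a covariance ellipse is a covariance "ellipse" in one dimension, i.e. an interval of half-width $q\sqrt{a'\hat C a}$), and the distributional conclusion is classical.
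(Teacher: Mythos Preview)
Your proposal is correct and follows essentially the same route as the paper's proof: project onto $L_\rhoperp$, use that $\pi_\rhoperp(\mu)=0$ by definition of $\rho$, establish the geometric equivalence $\rho\in\rgeo \iff L_\rho\cap E\neq\emptyset \iff 0\in\pi_\rhoperp(E)$ (splitting into the case where the origin is inside or outside $E$), and identify this with the one-sample $t$-statistic of the projected data being at most $q$ in absolute value. The only cosmetic difference is that the paper first states the projection fact for an arbitrary unit vector $a$ (the projected ellipse $\Pr_a(E)$ is a $(1-\alpha)$-confidence interval for $\Pr_a(\mu)$) and then specializes to $a\propto(\rho,-1)$, whereas you go straight to the $\rhoperp$-direction and phrase the one-dimensional condition explicitly as $|T|\le q$; the content is identical.
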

\begin{figure}[!bt]
  \begin{center}
 \includegraphics[width=0.7\textwidth]{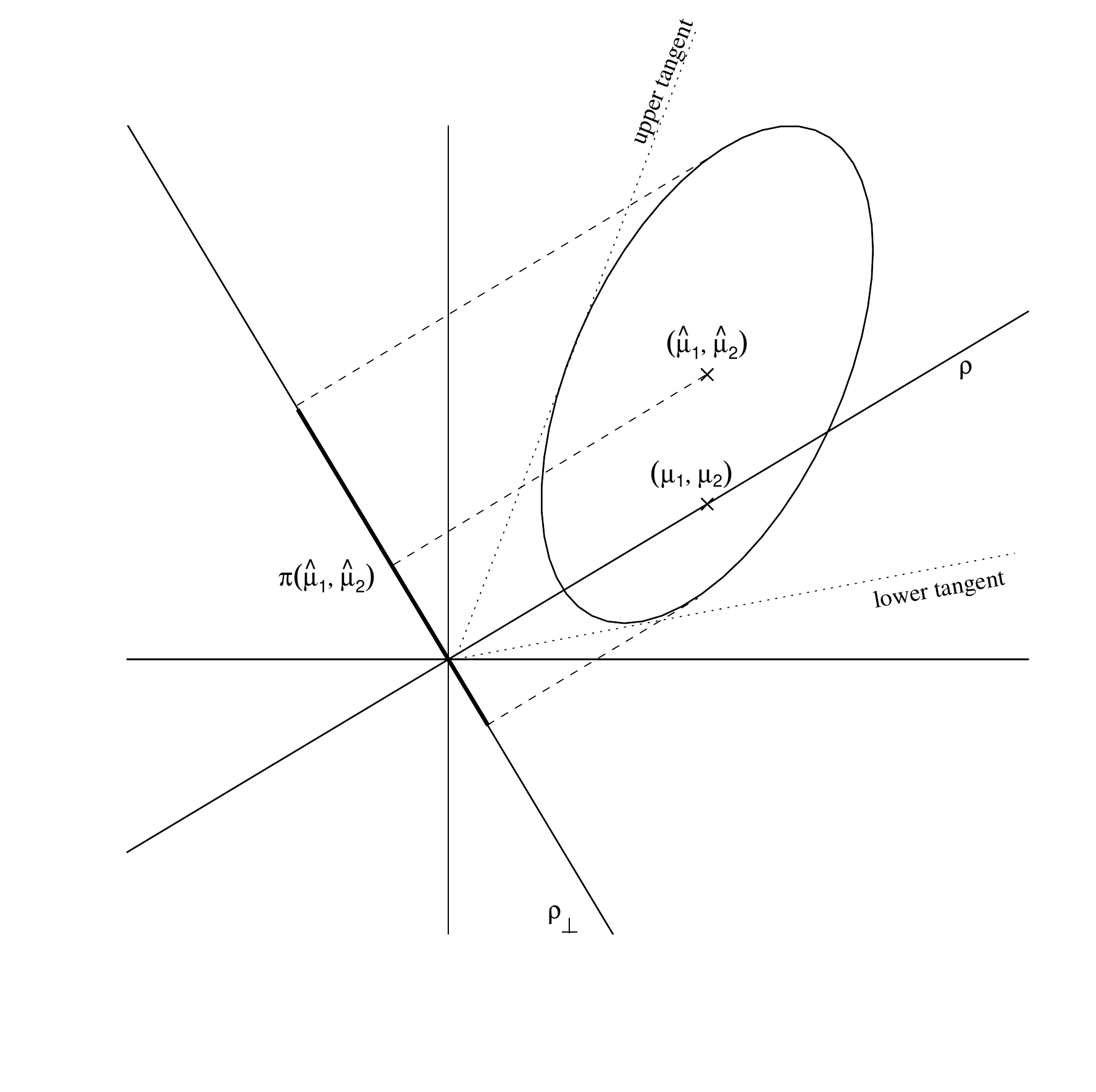}
\caption{Projection of the ellipse $E$ on the subspace spanned by
$\rho_{\perp}$ (see proof of Theorem \ref{th-we-are-exact}). }
  \label{fig-fieller-proof}
  \end{center}
\end{figure}
\begin{proof}
Let $a = (a_1,a_2)'  \in \RR^2$ be an arbitrary
unit vector. 
We denote by $U := \Pr_a(X,Y)$ the projection of the joint random
variable $(X,Y)$ on the subspace spanned by $a$. 
Then $U$ is distributed according
to $N(a'\mu, a'Ca)$. The independent sample points $(X_i,Y_i)_{i=1,...n}$ are
mapped by $\Pr_a$ to independent sample points $(U_i)_{i=1,...,n}$. 
It is easy to see that the length of the interval $I := \Pr_a(E)$ is
$2 q (a'
\hat{C} a)^{1/2}$. 
Taking into account the choice of the scaling factor $q$ in
Construction \ref{construction-fieller-geometric} as the
$(1-{\alpha}/{2})$-quantile of the Student-$t$ distribution, by
the normality assumption on $(X,Y)$ we can now conclude that the
projected ellipse $\Pr_a(E)$ is a $(1-\alpha)$-confidence interval for
the mean $\pi_a(\mu)$ of the projected random variables:
\ba
1-\alpha
&= P\left(\Pr_a(\mu) \in [\Pr_a(\hat\mu) - q (a'\hat{C}a)^{1/2},\;\Pr_a(\hat\mu) + q (a'\hat{C}a)^{1/2}  ]\right)\\
&= P\left(\Pr_a(\mu) \in \Pr_a(E)\right).
\ea
This equation is true for all unit vectors $a$. 
Now we want to consider the particular projection $\pi_\rhoperp$ on
the line $L_\rhoperp$ (that is, we choose \sloppy $a = (\rho/\sqrt{1+\rho^2},
{-1}/\sqrt{1+\rho^2})$). 
Showing that $ \Pr_\rhoperp(\mu) \in
\Pr_\rhoperp(E) \iff \rho \in \rgeo$ will complete our proof. 
As in the construction of $\rgeo$ we distinguish two cases. If the
origin is not inside the ellipse $E$ we can
construct the wedge $W$ as described in the construction of $\rgeo$. In
this case we have the following geometric equivalences (see Figure
\ref{fig-fieller-proof}):
\ba \Pr_\rhoperp(\mu) \in \Pr_\rhoperp(E) 
&  \iff 0 \in \Pr_\rhoperp(E) 
& \iff E \cap L_\rho \neq \emptyset 
& \iff L_\rho \subset W 
& \iff \rho \in \rgeo.
\ea
In the second case, the origin  is inside in the ellipse $E$.
In this case it is clear that $\Pr_\rhoperp(\mu) = 0$ is
always inside $\Pr_\rhoperp(E)$. On the other hand, by
definition the region $\rgeo$ coincides with $]{-\infty},\infty[$ in
this case, and thus $\rho \in \rgeo$ is true. 
\mbox{}
\end{proof}

\subsectionsinica{Comparison to Fieller's confidence sets} \label{sec-proof}
Theorem \ref{th-we-are-exact} shows that the confidence regions $\rgeo$
obtained by Construction~\ref{construction-fieller-geometric} are
exact confidence regions. Now we want to compare them to the
classic confidence sets constructed by 
\citet{Fieller32,Fieller40,Fieller44,Fieller54}.  To this end
let us first state Fieller's result according to Subsection 4, p.
176-177 of \citep{Fieller54}. 
We reformulate his definition in
our notation:

\begin{definition} \label{construction-fieller} 
{\bf(Fieller's confidence regions for $\boldsymbol\rho$ in case of normal distributions) }
Compute the quantities
\ba
& q_{\text{exclusive}}^2:= \frac{\hat\mu_1^2}{\hat{c}_{11}}
\;\; \text{  and } \;\;
q_{\text{complete}}^2:= \frac{
\hat\mu_2^2 \hat{c}_{11}
- 2 \hat\mu_1 \hat\mu_2 \hat{c}_{12}
+\hat\mu_1^2 \hat{c}_{22}}{\hat{c}_{11}\hat{c}_{22} -
\hat{c}_{12}^2} \;\; \text{  and } \;\;\\
& l_{1,2} = 
\frac{1}{\hat\mu_1^2 - q^2 \hat{c}_{11}} 
\bigg(
(\hat\mu_1 \hat\mu_2 - q^2 \hat{c}_{12}) 
 \pm
\sqrt{
(\hat\mu_1 \hat\mu_2 - q^2 \hat{c}_{12})^2 - 
(\hat\mu_1^2 - q^2 \hat{c}_{11})
(\hat\mu_2^2 - q^2 \hat{c}_{22}) }
\bigg)
\ea 
with  $q$ as in the
definition of the confidence regions $\rgeo$.
Then define the confidence set $\rfieller$ of level $1-\alpha$ for the ratio $\rho$ as follows:
\ba
\rfieller =
\begin{cases}
]-\infty,\infty[
& \text{ if }
q_{\text{complete}}^2 \leq q^2 \\
]-\infty, \min\{l_1,l_2\}] \;\;\; \cup \;\;  [\max\{l_1,l_2\},\infty[
& \text{ if }
q_{\text{exclusive}}^2 < q^2 < q_{\text{complete}}^2 \\
[ \min\{l_1,l_2\}, \max\{l_1,l_2\}]
& \text{ otherwise } \\
\end{cases}
\ea
Those three cases result in completely unbounded, exclusive
unbounded, and  bounded confidence sets, respectively. 
\end{definition}
\begin{theorem}[Fieller]
\label{th-fieller}
Let $(X_i,Y_i)_{i=1,...,n}$ be an i.i.d. sample
drawn from the distribution $N(\mu,C)$ with unknown $\mu$ and
$C$. Then $\rfieller$  as given in  Definition 
\ref{construction-fieller} is an  exact confidence region of level
$1-\alpha$ for $\rho$. 
\end{theorem}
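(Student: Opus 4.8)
The plan is to show that Fieller's region $\rfieller$ of Definition~\ref{construction-fieller} coincides, up to an event of probability zero, with the geometric region $\rgeo$ of Construction~\ref{construction-fieller-geometric}; the theorem then follows at once from Theorem~\ref{th-we-are-exact}. The bridge between the algebraic formulas and the geometric picture is the elementary fact already recorded in the proof of Theorem~\ref{th-we-are-exact}: for a unit vector $a$, the projection $\Pr_a(E)$ of the ellipse $E = E(\hat{C},\hat\mu,q)$ is the interval centered at $a'\hat\mu$ with half-length $q(a'\hat{C}a)^{1/2}$. Hence, for an arbitrary (not necessarily unit) vector $n\in\RR^2$ and any $c\in\RR$, the line $\{z : n'z = c\}$ meets $E$ if and only if $(n'\hat\mu - c)^2 \le q^2\, n'\hat{C}n$, with equality exactly when the line is tangent to $E$ (normalize $n$ and use that the achievable projections onto the $n$-direction form precisely the interval above).

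First I would translate the three case distinctions of Definition~\ref{construction-fieller} into geometry. Applying the criterion with $n=(1,0)'$, $c=0$ shows the $y$-axis meets $E$ iff $\hat\mu_1^2 \le q^2\hat{c}_{11}$, i.e.\ iff $q_{\text{exclusive}}^2 \le q^2$; geometrically this is exactly the condition that the $y$-axis lies in the wedge $W$. Computing $\hat\mu'\hat{C}^{-1}\hat\mu$ with the explicit $2\times2$ inverse gives $\hat\mu'\hat{C}^{-1}\hat\mu = q_{\text{complete}}^2$, so the origin lies in $E$ iff $q_{\text{complete}}^2 \le q^2$, which is precisely the completely unbounded case of both constructions. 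A one-line Cauchy--Schwarz estimate applied to the vectors $\hat{C}^{1/2}(1,0)'$ and $\hat{C}^{-1/2}\hat\mu$ yields $q_{\text{exclusive}}^2 \le q_{\text{complete}}^2$, so the three cases of Definition~\ref{construction-fieller} are exhaustive and mutually exclusive and line up with the ``bounded / exclusive unbounded / completely unbounded'' trichotomy of Construction~\ref{construction-fieller-geometric}.

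Next I would identify the endpoints. Taking $n=(\rho,-1)'$, $c=0$, so that the line is $L_\rho$, the intersection criterion becomes $g(\rho)\le 0$, where
\ba
g(\rho) = (\hat\mu_1^2 - q^2\hat{c}_{11})\rho^2 - 2(\hat\mu_1\hat\mu_2 - q^2\hat{c}_{12})\rho + (\hat\mu_2^2 - q^2\hat{c}_{22}).
\ea
Solving the quadratic shows that its roots are exactly the numbers $l_1,l_2$ of Definition~\ref{construction-fieller}, and the tangency case shows they are the slopes of the two tangents from the origin to $E$. The proof of Theorem~\ref{th-we-are-exact} already established that, when the origin is not in $E$, one has $\rho\in\rgeo \iff L_\rho$ meets $E \iff g(\rho)\le 0$, while when the origin is in $E$ one has $\rgeo = {]-\infty,\infty[}$. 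It then remains only to read off $\{\rho : g(\rho)\le 0\}$ according to the sign of the leading coefficient $A:=\hat\mu_1^2 - q^2\hat{c}_{11}$: for $A>0$ (which is $q^2 < q_{\text{exclusive}}^2$, Fieller's ``otherwise'' case) the solution set is the bounded interval $[\min\{l_1,l_2\},\max\{l_1,l_2\}]$, and for $A<0$ together with $q^2 < q_{\text{complete}}^2$ (which is exactly $q_{\text{exclusive}}^2 < q^2 < q_{\text{complete}}^2$) it is ${]-\infty,\min\{l_1,l_2\}]}\cup{[\max\{l_1,l_2\},\infty[}$. In both subcases the roots are real: the origin lies outside $E$, so some line through the origin meets $E$, which makes $\{g\le 0\}$ nonempty and hence forces the discriminant of $g$ to be nonnegative. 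Matching these with the three formulas of Definition~\ref{construction-fieller} gives $\rfieller = \rgeo$.

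The only points needing care are the degenerate configurations: $A=0$, where Fieller's formula for $l_{1,2}$ is genuinely ill-defined, and the boundary cases where $E$ is tangent to the $y$-axis or passes through the origin exactly. All of these correspond to a set of $(\hat\mu,\hat{C})$ of Lebesgue measure zero, hence occur with probability zero under $N(\mu,C)$, so they do not affect the coverage statement; away from this null set $\rfieller = \rgeo$, and therefore $P(\rho\in\rfieller) = P(\rho\in\rgeo) = 1-\alpha$ by Theorem~\ref{th-we-are-exact}. I expect the main ``obstacle'' to be not any single hard step but the disciplined bookkeeping across the three cases, together with the two algebraic identities (the tangency condition producing $l_{1,2}$, and $\hat\mu'\hat{C}^{-1}\hat\mu = q_{\text{complete}}^2$); a direct proof mirroring that of Theorem~\ref{th-we-are-exact}, using $\rho\in\rfieller \iff L_\rho\cap E\neq\emptyset \iff \Pr_\rhoperp(\mu)\in\Pr_\rhoperp(E)$, is an equally viable alternative.
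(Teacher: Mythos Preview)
Your argument is correct, but it takes a different route from the paper's own proof of Fieller's theorem. The paper gives the classical direct argument: it introduces the statistic $T_{r,\hat C}(\hat\mu) = (\hat\mu_2 - r\hat\mu_1)/\sqrt{\hat c_{22} - 2r\hat c_{12} + r^2\hat c_{11}}$, observes that at $r=\rho$ it is a pivot with a Student-$t$ distribution on $n-1$ degrees of freedom, defines $\rfieller = \{r : |T_{r,\hat C}(\hat\mu)| \le q\}$, and then simply solves the resulting quadratic inequality in $r$ to read off the three cases. No appeal to Theorem~\ref{th-we-are-exact} or to the ellipse is made; the geometry is brought in only afterwards, when the paper separately states (Theorem~\ref{th-we-equal-fieller}) that $\rgeo = \rfieller$, sketching that proof as ``straightforward but lengthy calculations.''

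Your approach reverses the logic: you prove $\rfieller = \rgeo$ in detail (the tangency criterion, the identification $\hat\mu'\hat C^{-1}\hat\mu = q_{\text{complete}}^2$, the Cauchy--Schwarz ordering $q_{\text{exclusive}}^2 \le q_{\text{complete}}^2$, and the sign analysis of the quadratic $g$) and then inherit exactness from Theorem~\ref{th-we-are-exact}. In effect you have supplied the computations the paper defers to \citet{LuxFra04} for Theorem~\ref{th-we-equal-fieller}, and used that theorem together with Theorem~\ref{th-we-are-exact} to obtain Theorem~\ref{th-fieller}. This is perfectly valid and arguably more illuminating, since it makes the geometric content of Fieller's formulas explicit; the paper's direct pivot argument, on the other hand, is self-contained (it does not rely on Theorem~\ref{th-we-are-exact}) and is the historically standard proof. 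Your closing remark that one could also argue directly via $\rho\in\rfieller \iff 0\in\Pr_\rhoperp(E)$ is essentially what the paper does, once one recognizes $T_{\rho,\hat C}(\hat\mu)$ as the studentized projection onto $L_\rhoperp$.
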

{\em Proof of Fieller's theorem (sketch).} 
Consider the function
\banum
\label{eq-statistic-T}
T_{r,\hat C}(x) 
:= 
\frac{x_2- r x_1}{\sqrt{\hat{c_{22}}
    - 2 {r} \hat{c_{12}} + {r^2} \hat{c_{11}}} }
\eanum
where $r \in \RR$ is a parameter and $\hat C$ denotes the sample
covariance matrix. 
If applied to $r =
\rho$ and $x = \hat\mu$, the statistic $T_{\rho,\hat C}(\hat \mu)$ has
a Student-t distribution with $(n-1)$ degrees of freedom. %
The set $\rfieller:= \{r \in \RR |\; T_{r,\hat C}(\hat\mu) \in [-q,q]
\}$ now satisfies (by the definition of $q$ as Student-t quantile)  %
\ba
P(\rho \in \rfieller) 
= P(T_{\rho,\hat C}(\hat\mu) \in  [-q,q] )
= 1 - \alpha.
\ea
Solving $-q \leq T_{r,\hat C}(\hat \mu) \leq q$ for $r$  leads to a 
quadratic inequality whose solutions are given by Fieller's theorem.  \ulesqed \\

Let us make a few comments about this proof. The most important
property of the statistic $T_{\rho,\hat C}(\hat\mu)$ is the fact that
its distribution does not depend on $\rho$. That is, it is a pivotal
quantity. Otherwise, solving the inequalities $-q \leq T_{r,\hat
  C}(\hat \mu) \leq q$ for $r$ would not lead to an expression which
is independent of $\rho$. Moreover, note that the mapping
$T_{\rho,\hat C}$ projects the points on the line $L_\rhoperp$, and
additionally scales them such that the projected sample mean has
variance 1. In particular it is interesting to note that because
$T_{\rho,\hat C}(\mu) = 0$, the set
$
J_\rho = %
[T_{\rho,\hat C}(\hat\mu) - q, T_{\rho,\hat C}(\hat\mu) +q]
$
is a $(1-\alpha)$ confidence interval for the projected mean
$T_{\rho,\hat C}(\mu)$: 
\ba  
P(T_{\rho,\hat C}(\mu) \in J_\rho) 
& = 
P(0 \in [T_{\rho,\hat C}(\hat\mu) - q, T_{\rho,\hat C}(\hat\mu) +q])
& = P(T_{\rho,\hat C}(\hat\mu) \in [-q,q]) 
= 1 - \alpha.
\ea
This property will be used later on to generalize Fieller's confidence
set to more general distributions. Also note that  solving the
inequality  $-q \leq T_{r,\hat C}(\hat \mu) \leq q$ 
coincides with the
construction of the wedge in the geometric construction. The wedge can
be seen as exactly the lines with slope $r$ such that the projection of $\hat\mu$ on
$L_{\hat \rperp}$ is still within $[-q,q]$.\\

Based on all those observations it is 
very natural to expect a close relation between $\rfieller$ and $\rgeo$. Still, a
priori it is not clear that those two confidence sets coincide, as
confidence sets are not necessarily unique. But the following theorem
proves that this is indeed the case:

\begin{theorem}[$\rgeo$ and $\rfieller$ coincide]\sloppy
\label{th-we-equal-fieller}
The confidence region $\rgeo$ defined in Construction
\ref{construction-fieller-geometric} coincides with $\rfieller$ as given in Definition  \ref{construction-fieller}.
\end{theorem}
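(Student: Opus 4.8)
The plan is to re-express both $\rgeo$ and $\rfieller$ as the solution set of one and the same one-dimensional inequality in the pivotal statistic $T_{r,\hat C}(\hat\mu)$, after which equality is immediate from identifications already contained in the proofs of Theorems~\ref{th-we-are-exact} and~\ref{th-fieller}. First I would record a purely set-theoretic description of $\rgeo$: in all three cases of Construction~\ref{construction-fieller-geometric},
\[
\rgeo \;=\; \{\, r \in \RR \;:\; L_r \cap E \neq \emptyset \,\}, \qquad E := E(\hat C,\hat\mu,q).
\]
In the completely unbounded case the origin lies in $E$, every line through the origin meets $E$, and both sides equal $\RR$. Otherwise the origin lies outside the compact convex set $E$ and the wedge $W$ is the union of all lines through the origin that meet $E$ (a double cone bounded by the two tangents from the origin, together with the vertical line when it meets $E$); since $\rgeo = W \cap \{x=1\}$ is by construction $\{r\in\RR : (1,r)\in W\}$ and the line through the origin and $(1,r)$ is exactly $L_r$, the displayed identity follows. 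The vertical line, when it belongs to $W$, contributes nothing to $\rgeo\subseteq\RR$ but is precisely what makes the slope set wrap around and produce the exclusive-unbounded shape.

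The key step is then a one-line computation. Fix $r\in\RR$ and let $a=a(r):=(r,-1)'/\sqrt{1+r^2}$, a unit vector spanning $L_{\rperp}$. By the observation used in the proof of Theorem~\ref{th-we-are-exact}, $\pi_a(E)$ is the interval centred at $a'\hat\mu$ of half-length $q\,(a'\hat C a)^{1/2}$, and since $L_r=\{x\in\RR^2 : a'x=0\}$ we have $L_r\cap E\neq\emptyset \iff 0\in\pi_a(E) \iff |a'\hat\mu|\le q\,(a'\hat C a)^{1/2}$. Substituting $a$ gives $a'\hat\mu=(r\hat\mu_1-\hat\mu_2)/\sqrt{1+r^2}$ and $a'\hat C a=(\hat c_{22}-2r\hat c_{12}+r^2\hat c_{11})/(1+r^2)$, so that
\[
\frac{|a'\hat\mu|}{(a'\hat C a)^{1/2}}
= \frac{|r\hat\mu_1-\hat\mu_2|}{\sqrt{\hat c_{22}-2r\hat c_{12}+r^2\hat c_{11}}}
= \bigl|T_{r,\hat C}(\hat\mu)\bigr|.
\]
Hence $L_r\cap E\neq\emptyset \iff |T_{r,\hat C}(\hat\mu)|\le q \iff T_{r,\hat C}(\hat\mu)\in[-q,q]$. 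Combining this with the first step, $\rgeo=\{\,r\in\RR : T_{r,\hat C}(\hat\mu)\in[-q,q]\,\}$. But this is precisely the set which, in the proof of Theorem~\ref{th-fieller}, is shown to equal Fieller's $\rfieller$ of Definition~\ref{construction-fieller}: solving $-q\le T_{r,\hat C}(\hat\mu)\le q$ for $r$ yields a quadratic inequality whose boundary solutions are $l_1,l_2$ and whose trichotomy (bounded / exclusive unbounded / completely unbounded) is governed by $q_{\text{exclusive}}^2$ and $q_{\text{complete}}^2$. Therefore $\rgeo=\rfieller$.

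Nothing in this is deep; the care lies in the bookkeeping. In the first step one must justify the double-cone description of $W$ and, crucially, keep track of the point at infinity so that the wrap-around really produces the exclusive-unbounded interval rather than a bounded one. If one prefers a self-contained final step that does not invoke the proof of Theorem~\ref{th-fieller}, the alternative is to match $\{\,r : |T_{r,\hat C}(\hat\mu)|\le q\,\}$ against Definition~\ref{construction-fieller} by hand: the quadratic formula produces $l_{1,2}$, and one then checks that the sign of the leading coefficient $\hat\mu_1^2-q^2\hat c_{11}$ (equivalently, whether $q_{\text{exclusive}}^2$ exceeds $q^2$) decides between ``interval between the roots'' and ``complement of that interval'', and that $q_{\text{complete}}^2\le q^2 \iff \hat\mu'\hat C^{-1}\hat\mu\le q^2 \iff (0,0)\in E$. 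This is routine algebra once the projection computation is in place, so I expect the only genuine subtlety to be the correct treatment of the vertical direction in the passage between wedges and slope intervals.
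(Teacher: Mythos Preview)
Your argument is correct and, in fact, cleaner than the route the paper sketches. The paper's own proof of this theorem proceeds by brute force: it first verifies by direct computation that the three geometric cases (bounded / exclusive unbounded / completely unbounded) match Fieller's three algebraic cases governed by $q_{\text{exclusive}}^2$ and $q_{\text{complete}}^2$, and then separately checks that the tangent slopes to $E(\hat C,\hat\mu,q)$ through the origin coincide with Fieller's $l_{1,2}$. The authors themselves call this ``straightforward but lengthy'' and defer the details to a technical report.

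Your approach sidesteps all of that by exhibiting the set $\{r:\ |T_{r,\hat C}(\hat\mu)|\le q\}$ as a common description of both $\rgeo$ and $\rfieller$. The identification $\rfieller=\{r:\ T_{r,\hat C}(\hat\mu)\in[-q,q]\}$ is precisely how the paper \emph{defines} $\rfieller$ in the proof sketch of Theorem~\ref{th-fieller}, so that direction is free. For the other direction you recycle the projection calculation from the proof of Theorem~\ref{th-we-are-exact} to show $L_r\cap E\neq\emptyset\iff |T_{r,\hat C}(\hat\mu)|\le q$, and reduce $\rgeo$ to the set of slopes $r$ with $L_r\cap E\neq\emptyset$. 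The paper does remark informally, just after the proof of Theorem~\ref{th-fieller}, that ``solving the inequality $-q\le T_{r,\hat C}(\hat\mu)\le q$ coincides with the construction of the wedge,'' but does not promote this observation to a proof of Theorem~\ref{th-we-equal-fieller}. You do, and that is the right move: it turns a page of algebra into a few lines, and it makes transparent \emph{why} the two constructions agree rather than merely \emph{that} they do. The only point requiring care, which you flag, is the bookkeeping around the vertical direction and the double-cone interpretation of $W$; once that is handled, nothing further is needed.
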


\begin{proof} {\em (Sketch)} First one has to show that the three
  cases in Fieller's theorem coincide with the three cases in the
  geometric approach. Second, one then has to verify that the numbers
  $l_1$ and $l_2$ in Fieller's theorem coincide with the slopes of the
  tangents to the ellipse. Both steps can be solved by straightforward
  but lengthy calculations. Details can be found in \citet{LuxFra04}.
\end{proof}
Note that in the proof of Fieller's theorem we did not directly use
the fact that we have {\em paired}
samples $(X_i,Y_i)_{i=1,...,n}$. Indeed, Fieller's theorem and its
proof can also be valid in the more general setting where we
are given two independent samples  $X_1,...,X_n$ %
and $Y_1,...,Y_m$ with a different number of sample points, and use
unbiased estimators for the means $\mu_1$, $\mu_2$ and independent
unbiased estimators for the (co)variances $\hat{c_{ij}}$. In this case
one has to take care to choose the degrees of freedom
in the Student-$t$-distribution appropriately, see
\citet{Buonaccorsi_01} and Section 3.3.3 of \citet{Rencher98}.\\
\sectionsinica{Exact confidence sets for general random
  variables}{sec-exact-non-normally} 
In this section we show how to extend our geometric approach to
non-normally distributed random variables. While it is straightforward
to extend our geometric approach to elliptically symmetric
distributions, re-interpreting the geometric construction also leads
to a new construction for more general circumstances. \\

 \subsectionsinica{Elliptically symmetric distributions} 
 In the normally distributed case, the main reason why Construction
 \ref{construction-fieller-geometric} leads to exact confidence sets
 is that the projected and studentized mean is Student-$t$
 distributed, no matter in which direction we project. More generally,
 such a property holds for all elliptically symmetric random
 variables.
Elliptically symmetric random variables can be written in the form
$\mu + A Y$ where $\mu$ is a shift parameter, $A$ is a matrix with
$AA' = C$, and $Y$ any spherically symmetric random variable generated
by some distribution $H$ on $\RR_{+}$. For a brief overview of
spherical and elliptical distributions see \citet{Eaton81}, for an
extensive treatment see \citet{FanKotNg90}. In particular, if $X$ is
an elliptically symmetric random variable with shift $\mu$, covariance
$C$, and generator $H$, then the statistic $T_{r,\hat C}(\hat\mu)$
introduced in Equation \eqref{eq-statistic-T} is a pivotal quantity
which has the same distribution for all $r \in \RR$.  Denote the
distribution function of this statistic by $G$.  To extend
Construction~\ref{construction-fieller-geometric} to the case of
elliptically symmetric distributions, all we have to do is to define
the quantile $q$ in Construction \ref{construction-fieller-geometric}
or Definition \ref{construction-fieller} by the quantile
$q(G,1-{\alpha}/{2})$ of the distribution $G$. With similar arguments
as in the last sections one can see that the resulting confidence set is
exact. \\

\subsectionsinica{Confidence sets for a very general class of distributions}
Once we leave the class of elliptically symmetric distributions, the
distributions of the projected means are no longer independent of the
direction of the projection, and all the techniques presented above
cannot be used any more.  However, there is a surprisingly simple way
to circumvent this problem. To see this, let us re-interpret
Construction \ref{construction-fieller-geometric} as depicted in
Figure \ref{fig-fieller-generalization-proof}.
Previously, to determine whether $r \in \RR$ should be element of
$R_{geo}$  we checked whether the line with slope $r$ is
inside the wedge enclosing the ellipse $E$. But note that the
same result can be achieved if we project the sample on the line
$L_\rperp$, construct a one-dimensional confidence set $J_r$ for the
mean on $L_\rperp$, and check whether $0 \in J_r$ or not. This
observation is the key to the following construction:

\begin{figure}[!bt]
  \begin{center}
 \includegraphics[width=0.45\textwidth]{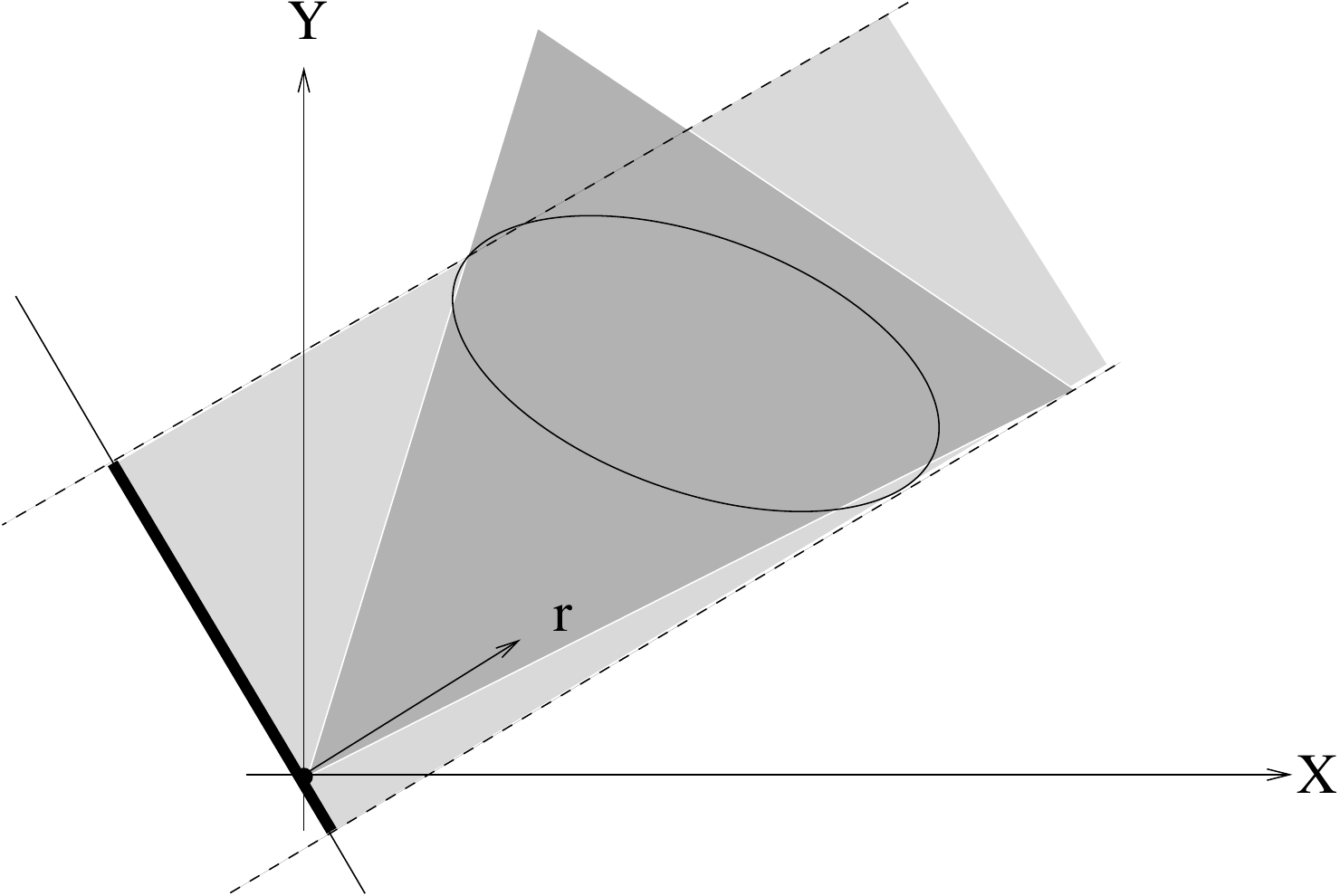}
\includegraphics[width=0.45\textwidth]{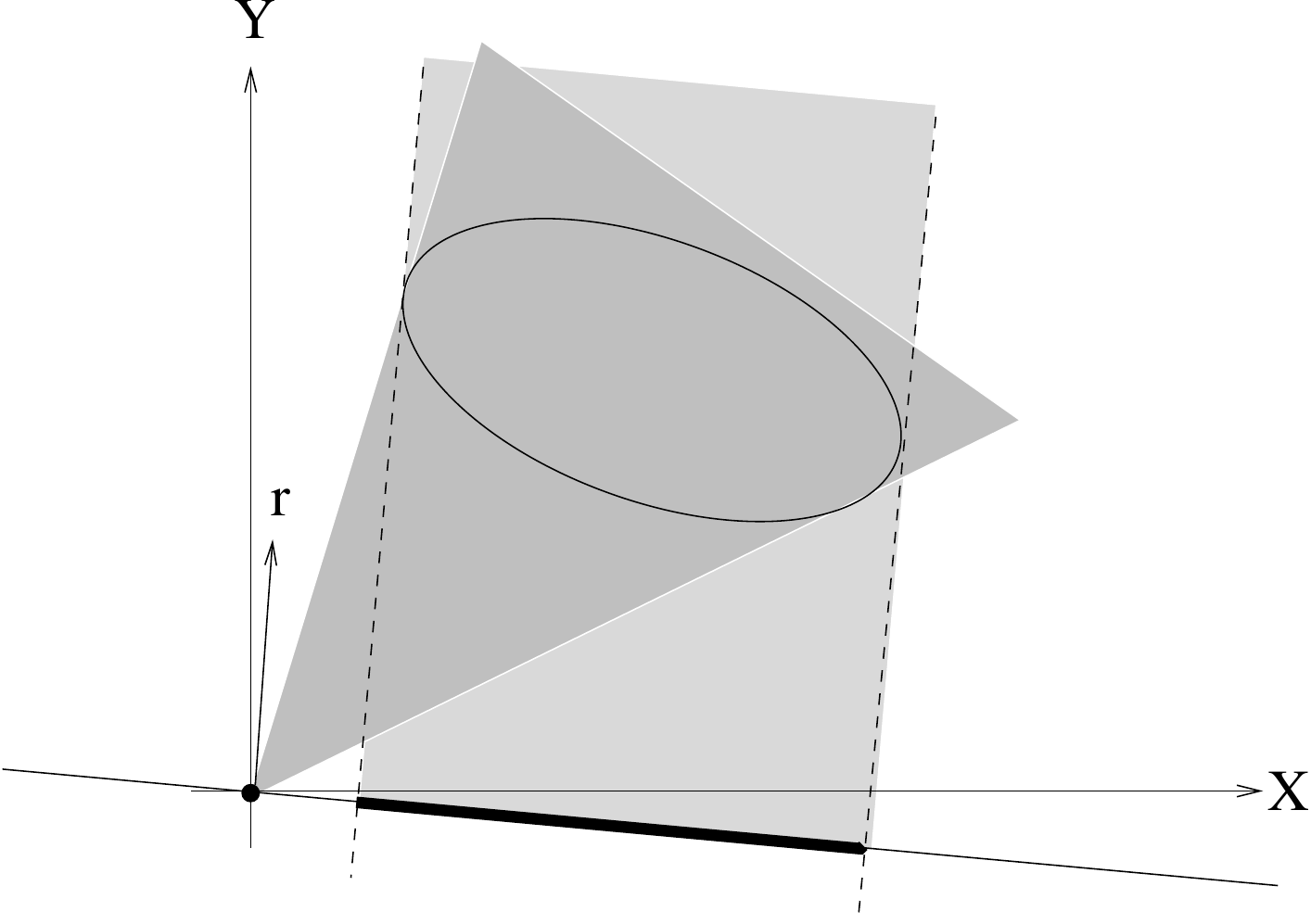}
\caption{Second geometric interpretation: By definition, ratio $r$ is element of
  Fieller's confidence set $\rgeo$ if the line $L_r$ (depicted by the
  little arrow) is inside the wedge enclosing the covariance ellipse. This is the case if
  and only if the origin is inside the projection 
$J_r := \pi_\rperp(E)$ of the ellipse on the line
  $L_\rperp$. The left panel shows a case where $r
  \in \rgeo$, the right panel a case where $r \not\in \rgeo$.  } 
  \label{fig-fieller-generalization-proof}
  \end{center}
\end{figure}
\begin{construction} \label{construction-exact-general}
{\bf (Exact confidence sets $\boldsymbol{\rgen}$ for $\boldsymbol\rho$ in case of general distributions)}
\begin{enumerate}
\item For each $r \in \R$, project the sample points on $L_\rperp$, that is define the new points
  $U_{r,i} = \pi_\rperp(X_i,Y_i)$, $i=1,\hdots,n$. 
\item For each $r \in \R$, construct a confidence set $J_r$ for the mean of
  $U_{r,i}$, that is  a set such that $P(\pi_\rperp(\mu) \in J_r) = 1 - \alpha$. 
\item Then define the
  confidence set $\rgen$ for $\rho$ as $\rgen = \{r \in \R \; | \; 0 \in
    J_r\}$.
\end{enumerate} 
\end{construction}

The big advantage of this
construction is that the projection in direction of the true value
$\rho$ is not singled out as a ``special'' projection, we simply look
at {\em all projections}. Hence, Construction
\ref{construction-exact-general} does not require any knowledge about
$\rho$.  
\begin{theorem}[$\rgen$ is an exact confidence set for $\rho$]
\label{theorem-generalization-is-exact}
Let $(X_i, Y_i)_{i=1,...,n} \in \RR^2 $ be i.i.d. pairs of random
variables with arbitrary distribution such that the joint mean of
$(X,Y)$ exists. If the confidence sets $J_r$ used in Construction
\ref{construction-exact-general} exist and are exact (resp. conservative resp.
liberal) confidence sets of level $(1-\alpha)$ for the means of
$\pi_\rperp((X_i,Y_i))_{i=1,...,n}$, then $\rgen$ is an exact (resp.
conservative resp. liberal) confidence set for $\rho$.
\end{theorem}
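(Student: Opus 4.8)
The plan is to show the key equivalence that drives the whole construction: for any fixed $r \in \RR$, the statement $r = \rho$ fails to be ``witnessed'' by the projected confidence set exactly when $0 \notin J_r$, and more precisely that the event $\{\rho \in \rgen\}$ coincides (as an event on the sample space) with the event $\{\pi_\rperp(\mu) \in J_\rho\}$. First I would record the elementary geometric fact that for the true ratio $\rho = \mu_2/\mu_1$, the point $\mu = (\mu_1,\mu_2)$ lies on the line $L_\rho$, which is precisely $L_{\rho_\perp}^{\perp}$; hence the orthogonal projection $\pi_{\rho_\perp}(\mu) = 0$. This is the non-normal analogue of the identity $T_{\rho,\hat C}(\mu) = 0$ noted after the proof of Fieller's theorem, and it holds for \emph{any} distribution since it is a statement about the fixed vector $\mu$, not about randomness.

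With that in hand, the argument is a short chain of equivalences. By the definition of $\rgen$ in Construction~\ref{construction-exact-general}, we have $\rho \in \rgen \iff 0 \in J_\rho$. Since $\pi_{\rho_\perp}(\mu) = 0$ by the observation above, this is the same as $\pi_{\rho_\perp}(\mu) \in J_\rho$. Therefore
\ba
P(\rho \in \rgen) = P(0 \in J_\rho) = P(\pi_{\rho_\perp}(\mu) \in J_\rho).
\ea
Now $J_\rho$ is, by hypothesis, a confidence set of level $1-\alpha$ for the mean $\pi_{\rho_\perp}(\mu)$ of the one-dimensional random variables $U_{\rho,i} = \pi_{\rho_\perp}(X_i,Y_i)$ — and here one uses that projecting an i.i.d.\ sample gives an i.i.d.\ sample, and that the mean of $U_{\rho,i}$ is indeed $\pi_{\rho_\perp}(\mu)$, which holds because linearity of expectation commutes with the linear map $\pi_{\rho_\perp}$ (this needs the joint mean of $(X,Y)$ to exist, which is assumed). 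Hence $P(\pi_{\rho_\perp}(\mu) \in J_\rho) = 1-\alpha$ in the exact case, $\geq 1-\alpha$ in the conservative case, and $\leq 1-\alpha$ in the liberal case, giving the three claimed conclusions verbatim.

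The point I would emphasize is that the construction enumerates confidence sets $J_r$ for \emph{all} $r$, but the probability statement only ever touches the single one indexed by the true $\rho$; all the other $J_r$ are irrelevant for the coverage calculation, they merely shape the set $\rgen$. This is exactly why no knowledge of $\rho$ is needed to \emph{build} $\rgen$, yet its coverage is controlled by the one projection that happens to be ``correct.'' The main (and only) subtlety to get right is the measurability/well-definedness point: one should note that $\rgen = \{r : 0 \in J_r\}$ needs to be a sensible (measurable) subset of $\RR$ so that ``$\rho \in \rgen$'' is an event — but since the coverage probability is computed only through the equivalence $\{\rho \in \rgen\} = \{0 \in J_\rho\}$ and $J_\rho$ is a bona fide confidence set, no joint measurability in $(r,\omega)$ is actually required; the statement reduces cleanly to the one-dimensional coverage property of $J_\rho$. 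There is no real obstacle here — the theorem is essentially a tautology once the identity $\pi_{\rho_\perp}(\mu)=0$ is spotted — so the ``hard part'' is purely expository: making clear that the universally-quantified construction collapses to a single instance in the proof.
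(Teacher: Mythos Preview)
Your proposal is correct and follows essentially the same approach as the paper: the chain $P(\rho \in \rgen) = P(0 \in J_\rho) = P(\pi_{\rho_\perp}(\mu) \in J_\rho) = 1-\alpha$, with the key geometric observation $\pi_{\rho_\perp}(\mu)=0$ and linearity of expectation to identify the mean of the projected sample with the projected mean, is precisely the paper's argument. Your added remarks on measurability and on why only the single projection indexed by $\rho$ enters the coverage calculation are sound elaborations but not part of the paper's (very short) proof.
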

\begin{proof} In the exact case, we have to prove that the true ratio $\rho$ satisfies
  $P(\rho \in \rgen) = (1-\alpha)$. By definition of $\rgen$, for each
  $r\in\RR$ we have that $r \in \rgen \iff 0 \in J_r$. In particular, this
  also holds for $r = \rho$. Moreover, the projection corresponding to
  the true ratio $\rho$ projects the true mean $\mu$ on the
  origin of the coordinate system. By linearity, the projection of the true
  mean $\pi_\rhoperp(\mu)$ equals the mean of the projected
  random variables. By construction of $J_r$ we know
  that the latter is inside $J_r$ with probability exactly
  $(1-\alpha)$. So we can conclude that 
$
P(\rho \in \rgen) = 
P(0 \in J_\rho) = 
P(\pi_\rhoperp( \mu) \in J_\rho) = 
1-\alpha.
$
\end{proof}

We proved that the set $\rgen$ defined in Theorem
\ref{theorem-generalization-is-exact} is an
{\em exact} confidence set for the ratio of random variables.  The
only assumptions are that the means of $X$ and $Y$ exist and that
there is a rule to compute exact confidence intervals for the means of
the projections $\pi_\rperp(X,Y)$. 
To our knowledge, Construction \ref{construction-exact-general} is the
first construction of exact confidence sets for general distributions.
It reduces the difficult problem of estimating confidence sets for the
ratio of two random variables to the problem of estimating confidence
sets for the means of one-dimensional random variables.  On a first
glance this looks very promising.
However, the crux for applying this construction in practice is that
one has to know the analytic form of the distribution of the projected
means. For this one has to be able to derive an analytic expression
for general linear combinations of $X$ and $Y$. While there might be
some special cases in which this is tractable, for the vast majority of
distributions such an analytic form is not easy to obtain. As a
consequence, while being of theoretic interest, Construction
\ref{construction-exact-general} is of limited relevance for practical
applications.\\

\sectionsinica{Conservative confidence sets for more general random variables}{sec-conservative}
Our geometric principles can also be used to derive very simple
conservative confidence sets for general distributions. The main idea
is to replace the ellipse used in Construction \ref{construction-fieller-geometric}
by a more general convex set $M \subset \R^2$. A straightforward idea
is choose $M$ as a $(1-\alpha)$-confidence set for the bivariate
joint mean $\mu \in \RR^2$, that is a set such that $P(\mu \in M) =
1-\alpha$. Then, as above we can construct the wedge $W$  around $M$ which is
given by the two enclosing tangents and choose a confidence region
$\rcons$ by intersecting the wedge with the line $x=1$, distinguishing
between the same three cases as above. %
For general distributions, there exists a simple but
effective way to choose the convex set $M$. Namely, we take the axis-parallel
rectangle $A := I_1 \times I_2$, 
where the intervals
$I_1:=[l_1,u_1]$ and $I_2:=[l_2,u_2]$ are confidence
intervals for the {one-dimensional} means $\mu_1$ of $X$ and $\mu_2$ of
$Y$. 
Formally, the construction is the following: 
\begin{construction} \label{construction-conservative}
{\bf (Geometric construction of conservative
  confidence regions $\rcons$ for $\boldsymbol{\rho}$
  for general distributions)}
\begin{enumerate}
\item Construct exact confidence intervals $I_1$  and $I_2$ of level
  $(1-\alpha/2)$ for the means of
  $X$ and  $Y$, respectively. In the two-dimensional plane,
  define the rectangle $A=I_1
  \times I_2$. 
\item
\begin{enumerate}
\item If $(0,0) \text{ is not inside } A$, construct the two tangents to
  $A$ through the origin $(0,0)$, and let $W$ be the wedge enclosed by
  those tangents.  Define the confidence
  region $\rcons$ as the intersection of $W$ with the line
  $x=1$. Depending on whether the $y$-axis lies inside $W$ or not this
  results in an exclusive unbounded or a bounded confidence region
\item  If $(0,0) \text{ inside } A$, choose the confidence region as  $\rcons=
       ]-\infty,\infty[$. \\
\end{enumerate}
\end{enumerate}
\end{construction} 
\begin{theorem}[$\rcons$ is a conservative confidence set for $\rho$]
\label{theorem-conservative} \sloppy
Let $(X_i, Y_i)_{i=1,...,n} \in \RR^2 $ be i.i.d. pairs of random
variables with arbitrary distribution such that the joint mean of
$(X,Y)$ exists. If the confidence sets $I_1$ and $I_2$ used in
Construction \ref{construction-conservative} exist and are exact or
conservative confidence sets of level $(1-\alpha)$ for the means of
$X$ and $Y$, then $\rcons$ is a conservative confidence set for $\rho$
of level $(1-2\alpha)$.
\end{theorem}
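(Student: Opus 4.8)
The plan is to split the claim into a probabilistic estimate and a deterministic geometric fact, just as in the proof of Theorem~\ref{theorem-generalization-is-exact}, together with one extra ingredient: a union bound that turns the two \emph{one-dimensional} confidence intervals $I_1,I_2$ into a \emph{bivariate} confidence set $A$ for the joint mean $\mu=(\mu_1,\mu_2)$. Concretely I will establish $P(\mu\in A)\ge 1-2\alpha$ and the inclusion of events $\{\mu\in A\}\subseteq\{\rho\in\rcons\}$, and then chain them to obtain $P(\rho\in\rcons)\ge P(\mu\in A)\ge 1-2\alpha$, which is exactly the assertion.

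For the probabilistic step: by hypothesis $P(\mu_1\notin I_1)\le\alpha$ and $P(\mu_2\notin I_2)\le\alpha$, where $\mu_1=E(X)$, $\mu_2=E(Y)$. Since $A=I_1\times I_2$, the event $\{\mu\notin A\}$ is the union $\{\mu_1\notin I_1\}\cup\{\mu_2\notin I_2\}$, so the union (Bonferroni) bound gives $P(\mu\notin A)\le P(\mu_1\notin I_1)+P(\mu_2\notin I_2)\le 2\alpha$, i.e.\ $P(\mu\in A)\ge 1-2\alpha$. Note that no independence of $I_1$ and $I_2$ is used — which is essential, since $X$ and $Y$ are generally correlated — and it is precisely this appeal to the union bound that costs the factor $2$ in the confidence level.

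For the geometric step I claim $\mu\in A\Rightarrow\rho\in\rcons$. Because $\rho=\mu_2/\mu_1$, the point $\mu$ lies on the line $L_\rho$, so if $\mu\in A$ then $L_\rho\cap A\ni\mu$ is nonempty. If the origin lies in $A$, then by Construction~\ref{construction-conservative} we have $\rcons=\,]-\infty,\infty[$ and $\rho\in\rcons$ trivially. If the origin does not lie in $A$, then the convex set $A$ is contained in the wedge $W$ spanned by its two tangents through the origin; since $L_\rho$ meets $A$, its slope lies between the two tangent slopes, hence $L_\rho\subseteq W$ and therefore $\rho\in W\cap\{x=1\}=\rcons$. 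This is the same chain of equivalences already verified for the ellipse $E$ in the proof of Theorem~\ref{th-we-are-exact}, now applied verbatim to the convex set $A$ (and here only the single implication, not the equivalence, is needed). Combining the two steps yields $P(\rho\in\rcons)\ge P(\mu\in A)\ge 1-2\alpha$, so $\rcons$ is a conservative confidence set for $\rho$ of level $1-2\alpha$. The argument has essentially no hard core; the only point that deserves care is this geometric step in the exclusive unbounded configuration, where $0\in I_1$ (so $A$ straddles the $y$-axis) while $0\notin A$: there $W$ contains the $y$-axis, $\rcons$ is a union of two unbounded rays, and one must check that ``slope of a line lying inside $W$'' still corresponds correctly to ``element of $W\cap\{x=1\}$''. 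Since this is exactly the case distinction already made for $\rgeo$ in Theorem~\ref{th-we-are-exact}, it can be invoked rather than repeated. (As elsewhere in the paper we assume $\mu_1\neq0$ so that $\rho$ is well defined; if $I_1$ is built so as to exclude $0$ in the bounded case, this holds automatically on the event $\mu\in A$.)
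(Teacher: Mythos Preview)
Your proof is correct and follows exactly the paper's approach: the paper's two-line argument is $P(\rho\in\rcons)=P(\mu\in W)\ge P(\mu\in A)=P(\mu_1\in I_1,\mu_2\in I_2)\ge 1-2\alpha$, which is precisely your geometric inclusion $\{\mu\in A\}\subseteq\{\rho\in\rcons\}$ followed by the Bonferroni bound. Your write-up is more explicit about the completely unbounded case and about why the geometric implication goes through for the rectangle $A$ in place of the ellipse, but the underlying decomposition and each step are identical to the paper's.
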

The proof of this theorem is nearly trivial and can be given in two lines:
\ba
P(\rho & \in  \rcons) = P(\mu \in W) 
 \geq
P(\mu \in A) 
=
P(\mu_1 \in I_1 \text{ and } \mu_2 \in I_2) \\
& = 1 - P(\mu_1 \not\in I_1 \text{ or } \mu_2 \not\in I_2) 
\geq
1 - ( P(\mu_1 \not\in I_1) + P(\mu_2 \not\in I_2)) 
 = 1 -2\alpha. %
\hspace{0.5cm}\ulesqed
\ea
Interestingly, it can be seen 
easily that the set $\rcons$ constructed using the rectangle coincides with the set obtained by
``dividing'' the one-dimensional confidence intervals $I_2$ by $I_1$, namely
$
\rcons = I_2/I_1 := \left\{\frac{y}{x}; \;  y \in I_2,x\in I_1\right\}.
$
The latter is a heuristic for confidence sets for ratios which can
sometimes be found in the literature, usually without any theoretical
justification. Our geometric method now reveals effortlessly that it is
statistically safe to use this heuristic, but that it will lead to
conservative confidence sets of level $1 - 2 \alpha$.

Of course, one could think of even more general ways to construct a
convex set $M \subset \R^2$ as base for the conservative geometric
construction. For example, instead of using axis-parallel projections
as in Construction \ref{construction-conservative}, one could base the
convex set $M$ on projections in arbitrary directions (for example,
using the two projections in direction of $\rho$ and $\rhoperp$, or
even using more than two projections). However, we would like to
stress one big advantage of using the axis-parallel rectangle.  While
the exact generalizations presented in Section
\ref{sec-exact-non-normally} require to construct confidence sets for
the means of arbitrary linear combinations of the form $a X + b Y$,
for the rectangle construction we only need to be able to construct
exact confidence sets for the marginal distributions of $X$ and $Y$,
respectively. One can envisage many situations where distributional
assumptions on $X$ and $Y$ are reasonable, but where the distributions
of projections of the form $aX + bY$ cannot be computed in closed
form. In such a situation, the rectangle construction can serve as an
easy loophole. The prize we pay is the one of obtaining conservative
confidence sets for the ratio instead of exact ones. But in many
cases, obtaining confidence sets which are provably conservative might
be preferred over using heuristics with unknown guarantees to
approximate exact confidence
sets.\\

\sectionsinica{Bootstrap confidence sets}{sec-bootstrap}
In the last sections we have seen how exact and conservative
confidence sets for ratios of very general classes of distributions
can be constructed. In practice, the application of those methods is
limited by the problem that we still need strong assumptions to apply
them:  we need to know the exact distributions of the
projections of $(X,Y)$.  In this section we want to investigate how
approximate confidence sets can be constructed in cases where the
underlying distributions are unknown. 
A natural candidate to construct approximate confidence sets for
ratios are bootstrap procedures (e.g., \citealp{Efron_79,Efron_Tibshirani_93,ShaTu95,DavHin97}).
However, if the variance of the statistics of interest does not exist,
as is usually the case for $\hat\rho$, bootstrap confidence regions
can be erroneous \citep{Athreya_87,Knight_89}. Moreover, standard
bootstrap methods which attempt to bootstrap the statistic $\hat\rho$
directly cannot result in unbounded confidence regions. This is
problematic, as it has been shown that any method which is not able to
generate unbounded confidence limits for a ratio can lead to arbitrary
large deviations from the intended confidence level
\citep{Gleser_Hwang_87,Koschat_87,Hwang_95}.  Hence, bootstrapping
$\hat\rho$ directly is not an option. 
Instead, in the
literature there are several approaches to use bootstrap methods based
on the studentized statistic $T_{\rho,\hat C}(\hat \mu)$ introduced in Equation
\eqref{eq-statistic-T}.  A simple approach along those lines is taken in
\citet{ChoLecLeg99}. The authors use standard bootstrap methods to construct a
confidence interval $[q1, q2]$ for the mean of the statistic
$T_{\hat\rho,\hat C}(\hat\mu)$. As confidence set for the ratio, they
then use the interval $[\hat\rho - q_2 S_{\hat\rho}, \hat\rho
-q_1 S_{\hat\rho}]$ where $S_{\hat\rho}$ is the estimated standard
deviation of $\hat\rho$.  However, this approach
is problematic: the confidence sets do not have the qualitative
behavior as the Fieller ones, and as they are always finite, the
coverage probability can be arbitrarily small. \\

\subsectionsinica{Bootstrap approach by Hwang and its
  geometric interpretation}
A more promising bootstrap approach for ratios has been presented by
\citet{Hwang_95}. He suggests to use standard bootstrap methods to construct 
  confidence sets for the mean of $T_{\hat\rho,\hat C}(\hat\mu)$. To determine the
  confidence set for the ratio, he then proceeds as Fieller
  and solves a quadratic equation to determine the confidence set
  for the ratio.  %
\citet{Hwang_95} argues that his confidence sets are advantageous 
when dealing with asymmetric distributions such as
  exponential distributions. %
  However, we need to be careful here. \citet{Hwang_95} only 
treats the case of one-sided confidence sets,
  where he 
   constructs a confidence set of the form $]-\infty,q]$ for
  $T_{\hat\rho,\hat C}(\hat\mu)$ and then solves the quadratic
  equation $T_{\hat\rho,\hat C}(\hat\mu)^2 \leq q^2$. This leads to
  the three well-known cases bounded, exclusively unbounded,
  completely unbounded. However, the two-sided case is more involved
  and is not discussed in his paper. If one uses
  symmetric bootstrap confidence sets of the form $[-q,q]$ for
  $T_{\hat\rho,\hat C}(\hat\mu)$, then one can proceed by solving one
  quadratic inequality similar to above.  However,
  if one wants to exploit the fact that the distribution might not be
  symmetric, one would have to use asymmetric (for example equal-tailed) confidence sets of the
  form $[q1,q2]$ for $T_{\hat\rho,\hat C}(\hat\mu)$. But then, solving
  the equations $ q1 \leq T_{\hat\rho,\hat C}(\hat\mu) \leq q_2$ can
  lead to unpleasant effects. To satisfy both inequalities
  simultaneously, one has to solve two different quadratic
  inequalities. The joint solution can not only
  attain the three Fieller types, but all possible intersections of
  two Fieller type sets. For example, one can obtain confidence sets for the
  ratio which are only unbounded on one side, such as %
  $]-\infty, l] \cup [l', u]$. Such confidence sets are quite 
  implausible: as we discussed after Construction
  \ref{construction-fieller-geometric}, in cases where the denominator
  is not significantly different from 0 the confidence set should be
  unbounded on both ends. Otherwise, the confidence set of the ratio
  would reflect a certainty about the sign of the denominator that is
  not present in the confidence set of the denominator itself. 
Consequently, we believe that Hwang's
  approach should only be used with symmetric (and not with
  equal-tailed) confidence sets for $T_{\hat\rho,\hat C}(\hat\mu)$.
  In this case, Hwang's bootstrap approach can easily be interpreted
  in our geometric approach and is in fact very similar to Fieller's
  approach: as in Construction \ref{construction-fieller-geometric},
  one forms the covariance ellipse centered at $\hat\mu$ using the
  estimated covariance matrix $\hat C$. But instead of using quantiles
  of the Student-t distribution to determine the width $q$ of the
  ellipse, one now uses  bootstrap quantiles for this purpose. Then one
  proceeds exactly as in the Fieller case. This geometric
  interpretation  reveals that Hwang's approach relies on one
  crucial assumption on the distribution of the sample means: their
  covariance structure has to be elliptical. So while seeming 
  distribution-free at first glance, Hwang's bootstrap
  approach with symmetric confidence sets relies on the implicit assumption that the sample mean is elliptically
  distributed. Below we will illustrate some consequences of this
  insight in simulations.\\

\subsectionsinica{A geometric bootstrap approach}
We now want to suggest a bootstrap approach which potentially is more
suited to deal with highly asymmetric distributions. To this end, we
will adapt the geometric Construction \ref{construction-conservative}
to a bootstrap setting. This can be done in a straightforward manner: we simply use
bootstrap methods to construct the one-dimensional confidence
intervals $I_1$ and $I_2$ used in Construction
\ref{construction-conservative}, and then proceed exactly as in
Construction \ref{construction-conservative}. The advantage of this
approach is obvious: we do not need to make any assumptions on the
distribution, can easily use asymmetric confidence intervals $I_1$ and
$I_2$, and still obtain a Fieller-type behavior (as opposed to Hwang's
method, which does not have this behavior when using asymmetric
bootstrap sets).  Moreover, our construction does not assume
elliptical covariance structure, and can, for example, be used for
heavy-tailed distributions which are not in the domain of attraction
of the normal law. In this sense, the geometric bootstrap approach can
be applied in situations where both Fieller's and Hwang's
confidence sets fail. This will be demonstrated below.  \\

Note that one can easily come up with other, more involved
bootstrap methods based on the geometric method. For example, one can
use more than two projections, one can use projections which are not
parallel to the coordinate axes, or one can even base the wedge on
more general two-dimensional convex sets in the plane. A completely
different approach can be based on bootstrapping polar representations of the data
(along the lines of \citealp{Koschat_87}). However, given that in
our simulations those methods did not perform better than the existing
methods we will not discuss those approaches in detail. \\

\subsectionsinica{Simulation study} 
In this section we would like to present some numerical simulations to
compare the bootstrap approach by Hwang, our geometric bootstrap
approach, and Fieller's standard confidence set. \\
{\bf Setup. } 
For both $X$ and $Y$ we use three different types
of distributions: \\
{\em Normal distributions. } Here we always fixed the mean
to 1 and varied the variance between 0.1 and 10.\\
 {\em Exponential distributions.} They are highly asymmetric, but
  still in the domain of attraction of the normal law. Here we varied the
  mean between 0.1 and 10. \\
{\em Pareto distributions } with density function $p(x) = a k^a /
  x^{a+1}$, cf. Chapter 20 of \citet{JohKotBal94}.  For a Pareto(k,a)
  distributed random variable, all moments of order larger than $a$ exist, the
  smaller moments do not exist. In particular, for $a \in
  ]1,2[$, the expectation exists, but the variance does not
  exist. In this case the
  distribution is heavy-tailed and not in the domain of
  attraction of the normal law. In our experiments, we varied the tail parameter $a$
  between 1.1 and 2.5 and always chose parameter $k$ such that the
  expectation is 1 (that is, we chose $k=(a-1)/a$). For some
  simulations we also used an inverted Pareto distribution (a Pareto
  distribution which has been flipped around its mean, so that its
  tail goes in the negative direction).\\
For each fixed distribution of $X$ and $Y$, we independently sampled
$n=20$ ($n=100$, $n=1000$, respectively) data points $X_i$ and
$Y_i$. Then we computed 
the Fieller confidence set according to Definition
\ref{construction-fieller}, our geometric bootstrap confidence sets as
introduced above, and Hwang's bootstrap confidence sets. Each simulation was
repeated $R=1000$ times to compute the empirical coverage. As nominal
coverage probability we always chose $90\%$ (in terms of coverage, this is
more meaningful than the level $95\%$ as it leaves more room for
deviations in both directions). 
To construct the bootstrap confidence sets for the one-dimensional
means of $X$ and $Y$ (in the geometric method) and the projection
$T_{\hat\rho,\hat C}(\hat\mu)$ (in Hwang's method) we used different
bootstrap methods.  As default bootstrap method we used bootstrap-t
(cf.  \citealp{Efron_Tibshirani_93}). We also tried several other
standard methods such as the percentile or the bias corrected and
accelerated (BCA) method (cf.  \citealp{Efron_Tibshirani_93}), but did
not observe qualitatively different behavior.  To deal with
heavy-tailed distributions, we applied methods based on subsampling
self-normalizing sums, as introduced by \citet{HalLep96}, see also
\citet{RomWol99}.  Here one has to choose one parameter, namely the
size $m$ of the subsamples. We did not use any automatic method to
optimize this parameter, but based on values reported in
\citet{RomWol99} we fixed it to $m=10$ ($40, 400$) for $n = 20$ ($100,
1000$).  For all bootstrap methods, we tried both equal-tailed and
symmetric confidence sets, in all cases with $B=2000$ bootstrap
samples.  We will report the bootstrap results using notations such as
Hwang(symmetric, bootstrap-t) or Geometric(equal-tailed, Hall). The
terms in parentheses always refer to the
construction of the confidence sets for the respective one-dimensional projections. \\

\begin{figure}[!bt]
  \begin{center}
 \includegraphics[width=\textwidth]{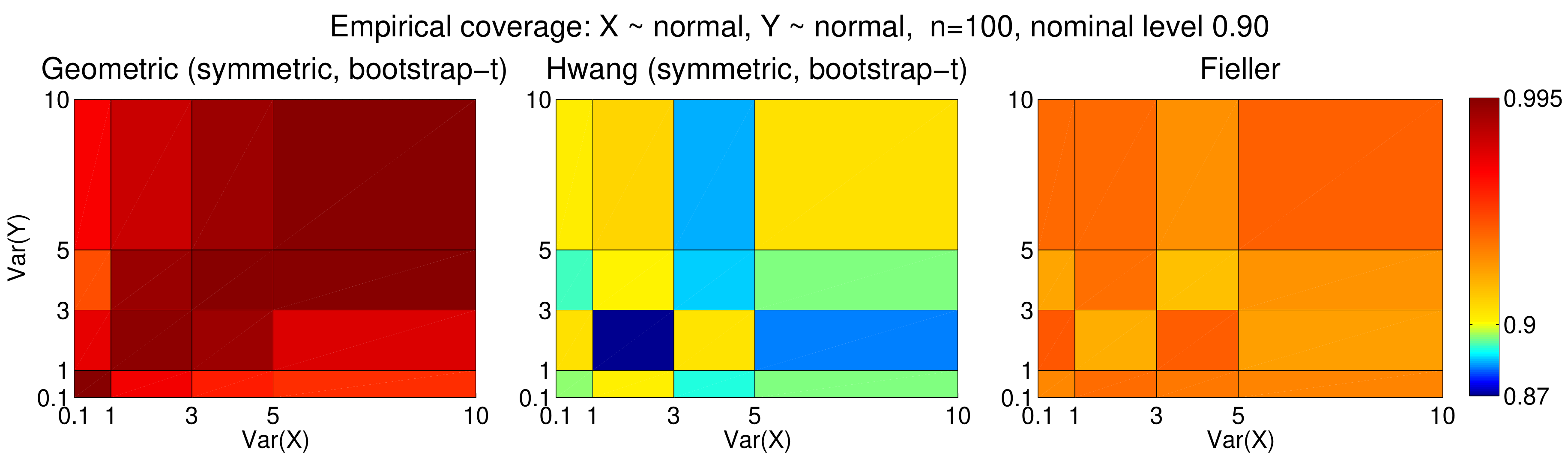}
 \includegraphics[width=\textwidth]{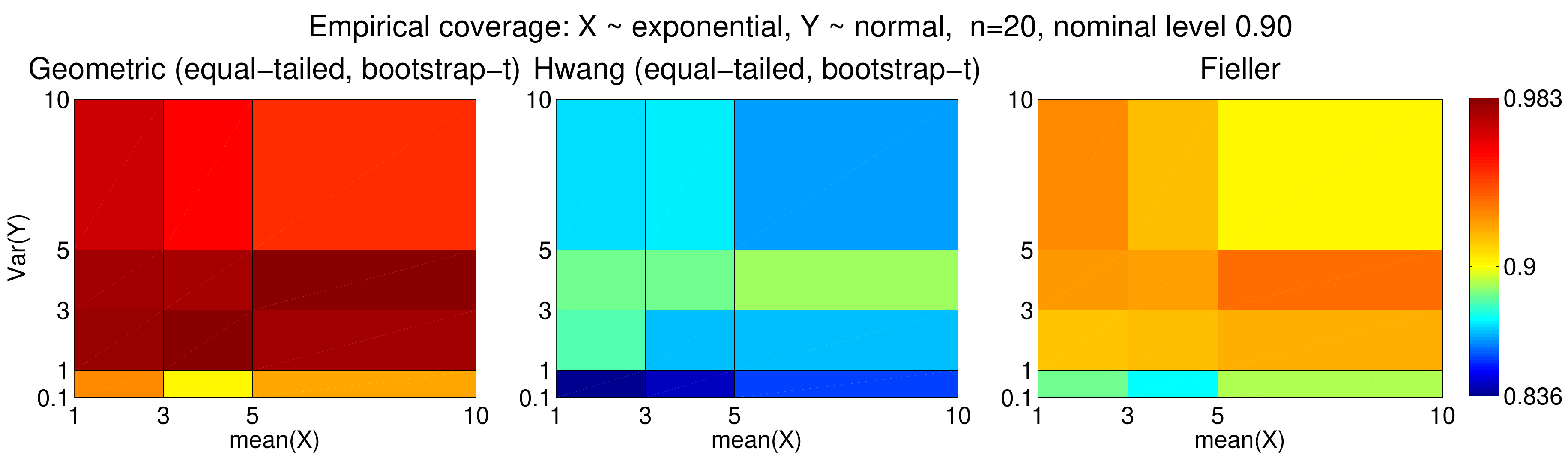}
 \includegraphics[width=\textwidth]{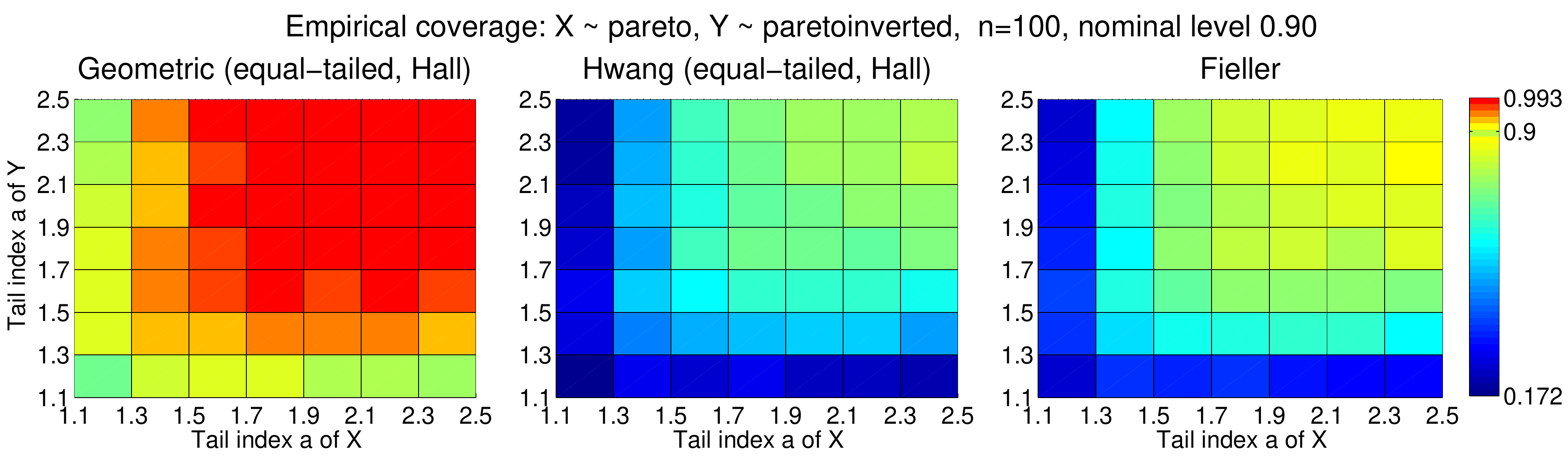}

\includegraphics[width=\textwidth]{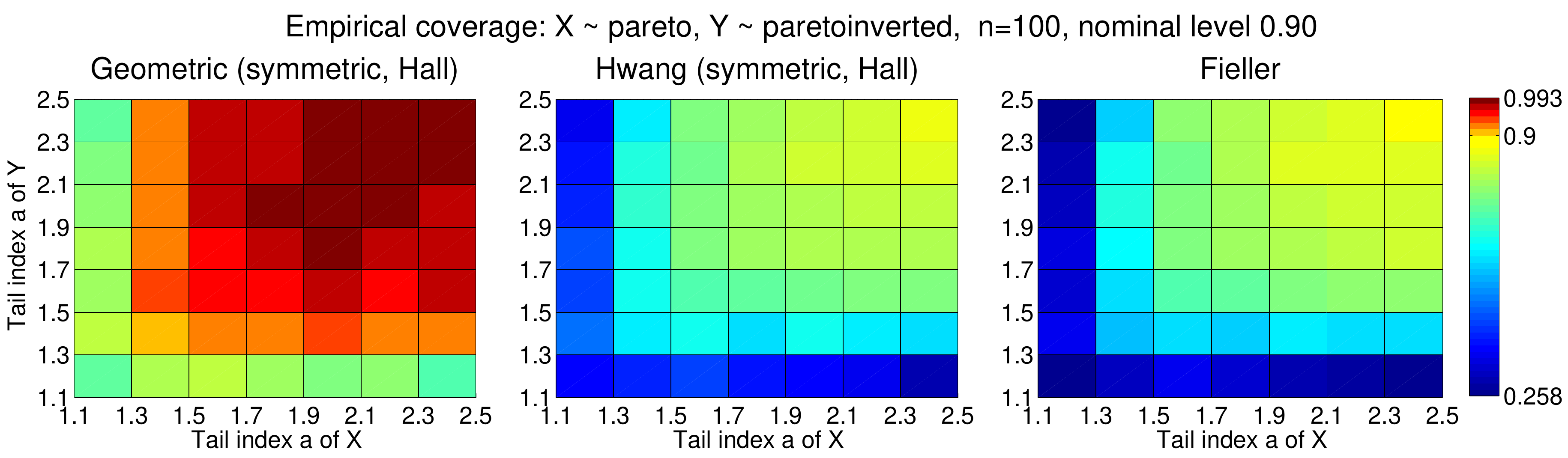}

 \includegraphics[width=\textwidth]{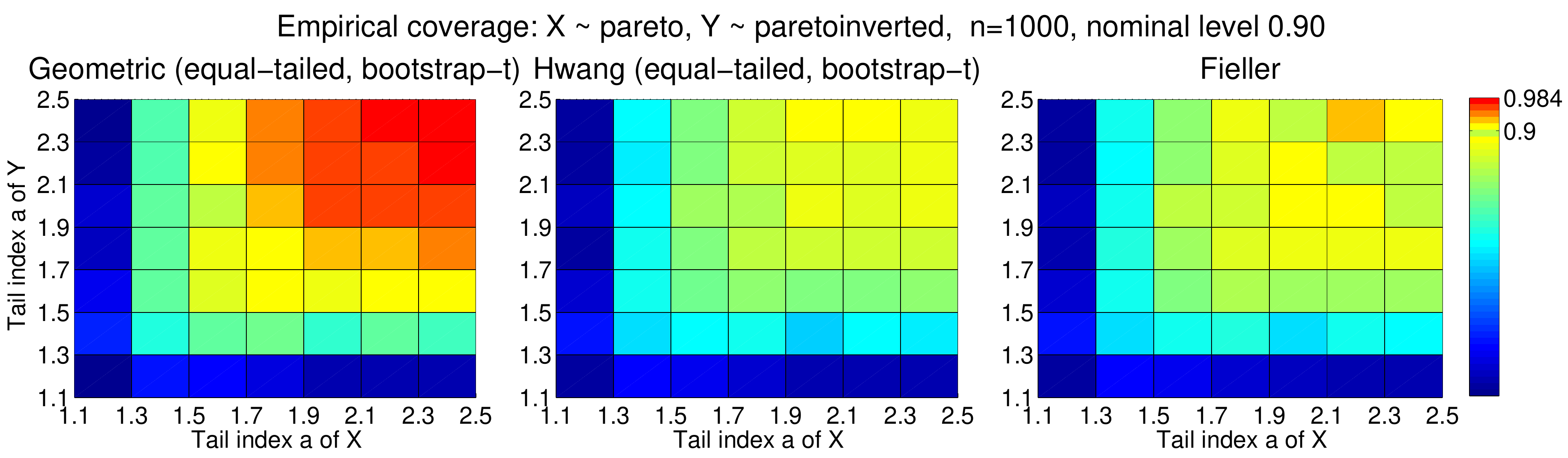}
 \caption{Empirical coverage. Each row corresponds to one
   fixed set of parameters, and shows the empirical coverage of the
   three methods. The nominal confidence level 0.90 is always
   depicted in yellow, red colors depict conservative and green/blue
   colors liberal confidence sets. The color scales are constant within each
   row, but change between the rows.  } 
 \label{fig-simulations-coverage}
\end{center}
\end{figure}
\begin{figure}[!bt]
  \begin{center}
 \includegraphics[width=\textwidth]{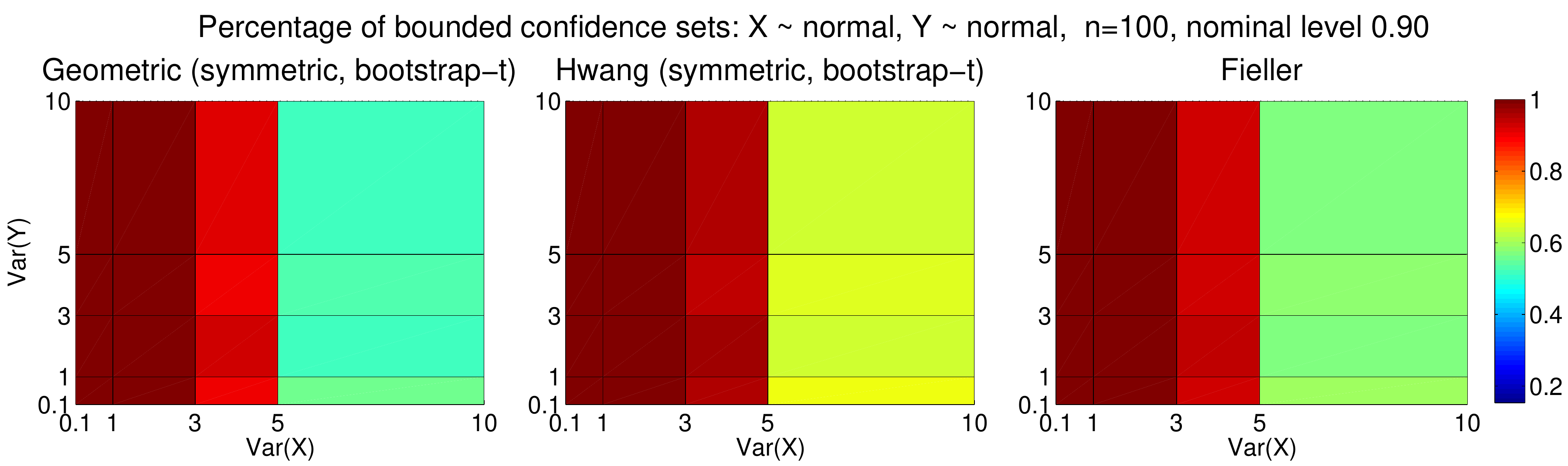}
 \includegraphics[width=\textwidth]{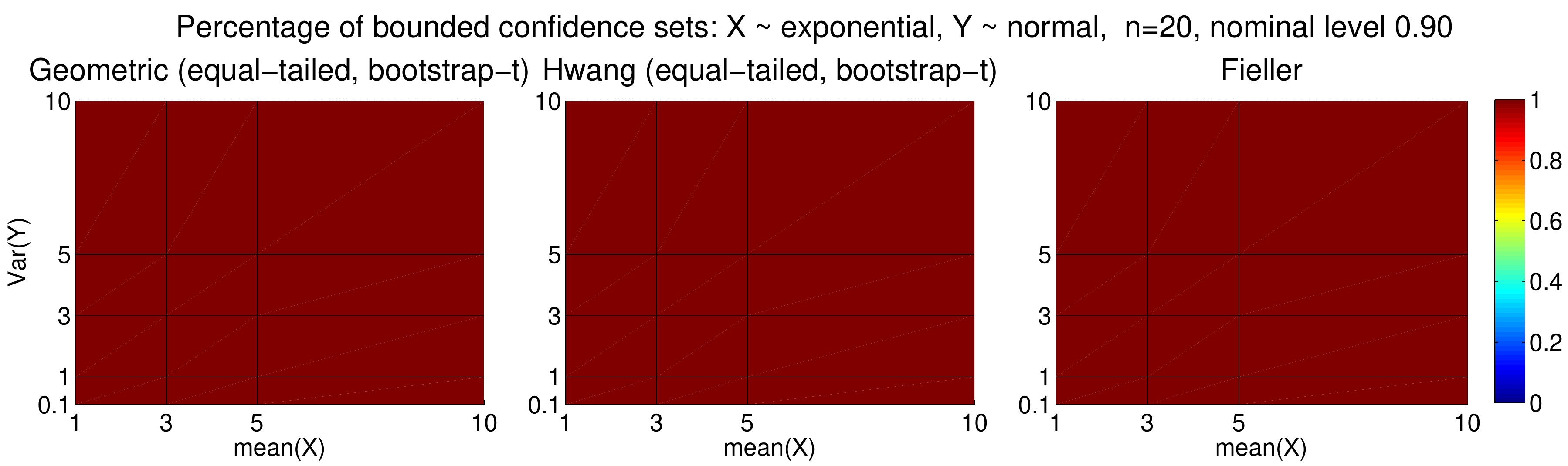}
 \includegraphics[width=\textwidth]{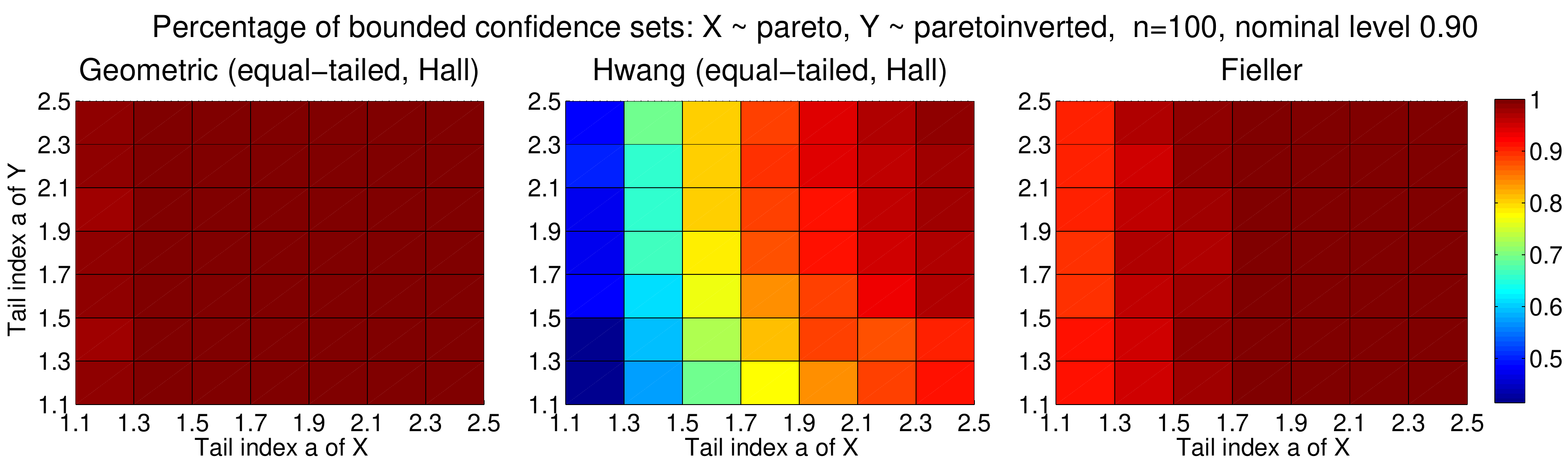} 
\includegraphics[width=\textwidth]{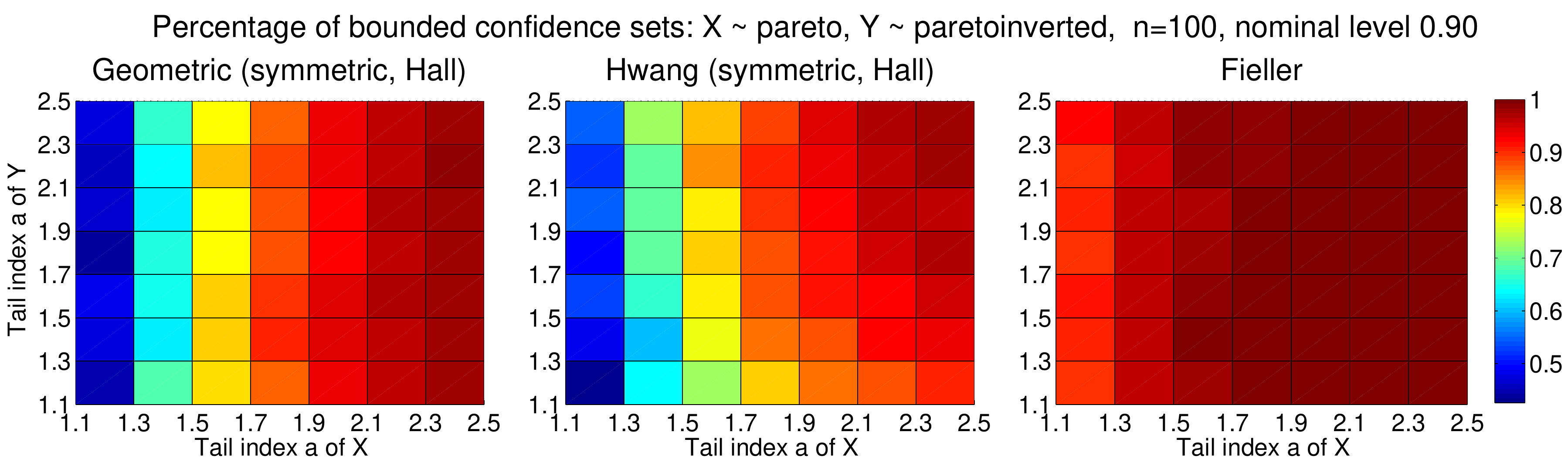}
 \includegraphics[width=\textwidth]{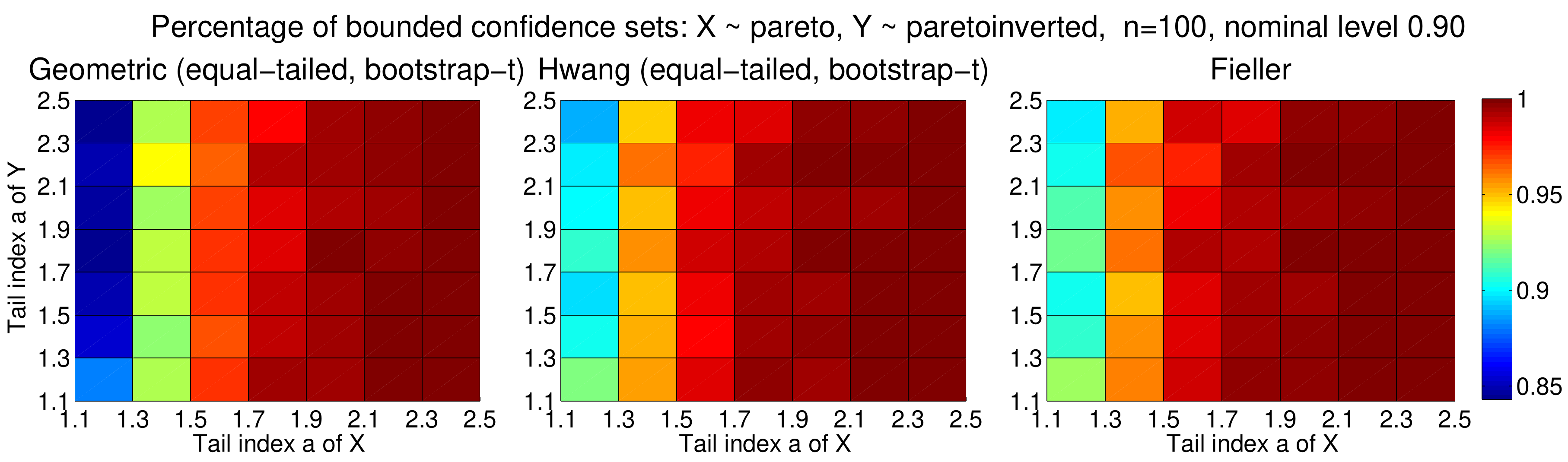}
 \caption{Percentage of bounded confidence sets (over 1000 simulations). Each
   row corresponds to one fixed set of parameters, and shows the
   percentage of bounded confidence sets for the three 
   methods. The color scales are constant within each
   row, but change between the rows. }
 \label{fig-simulations-bounded}
\end{center}
\end{figure}

{\bf Evaluation. } In all settings we evaluated the empirical coverage
(see Figure \ref{fig-simulations-coverage}) and the number of bounded
confidence sets (see Figure \ref{fig-simulations-bounded}). Due to
space constraints we cannot show the results for all parameter
settings in detail. Many more figures can be found in the
supplementary material to this paper \citep{LuxFra07_supplement}.

{\em Coverage properties in case of finite variance. }  We start with
the case where both $X$ and $Y$ are normally distributed
(Figure~\ref{fig-simulations-coverage}, first row).  Here Fieller's
confidence set is exact, and indeed we can see that it achieves very
good coverage values. In terms of absolute deviation from the nominal
confidence level, Hwang performs comparably to
Fieller. The difference is that Fieller tends to be slightly
conservative, while Hwang tends to be slightly liberal.%
As predicted, the geometric method is conservative and achieves higher
than nominal coverage. For all three methods, the results based on different
sample sizes and different bootstrap constructions are qualitatively
very similar (see supplement). 

To investigate the effect of symmetry, we consider the case where one
of the random variables is exponentially distributed and thus highly
asymmetric (Figure~\ref{fig-simulations-coverage}, second row).  We
can see that qualitatively, the three procedures behave as described
above (Fieller slightly conservative, Hwang slightly liberal,
geometric conservative), even for a small sample size $n=20$ (results
for larger $n$ are similar, see supplement). The fact that the original distribution
was asymmetric seems not to have much impact on the results. \\

{\em Coverage properties in heavy-tailed regime. }  The general
picture changes dramatically if we investigate the case of
heavy-tailed distributions. Here we consider simulations  with $X \sim $Pareto, $Y \sim$
Paretoinverted. The reason for using the inverted Pareto distribution for $Y$ 
(instead of the ``standard'' one) is that we want
to study a general asymmetric case --- the distribution of the
projections on $L_\rhoperp$ would be perfectly symmetric in case where
both $X$ and $Y$ are generated according to the same
distribution. Results for $X, Y \sim $ Pareto can be found in the
supplement.  
In Figure~\ref{fig-simulations-coverage},
third row, we can see that for the heavy-tailed parameters $a < 2$, both
Fieller's and Hwang's confidence sets fail completely and lead to
empirical coverage probabilities below 0.20 instead of 0.90. For
Hwang, the happens no matter what bootstrap method we use (symmetric or equal-tailed,
bootstrap-t or Hall), see Figure~\ref{fig-simulations-coverage}, rows three
to five and supplement. %
The method
Geometric(equal-tailed, Hall), on the other hand, performs much better
than both Fieller's and Hwang's methods in the heavy-tailed regime
$a<2$. The overall coverage of the geometric method never drops below
0.70, a dramatic improvement over the other two methods. %
It is interesting to observe that the good performance of the
geometric method in the heavy-tailed regime decreases massively if we use
bootstrap-t instead of Hall's bootstrap intervals
(Figure~\ref{fig-simulations-coverage}, fifth row). The reason is
that in the heavy-tailed case, bootstrap-t does not achieve 
good coverage for the one-dimensional projections, and then of course
the coverage of the
final confidence intervals suffers as well. 
Finally, when the Pareto tail parameter moves in the region $a>2$, we are again in the
domain of attraction of the normal law. Here all results
resemble again the ones already reported for the finite variance case. \\

{\em Interpretation of the results in terms of projections. } The
quality of all three methods crucially depends on the quality of the
one-dimensional confidence sets under consideration.  For
distributions in the domain of attraction of the normal law, Fieller's
confidence sets perform very well, even for highly asymmetric
distributions. The reason is that even for small sample sizes, the
distribution of the sample means is already so close to normal that
using bootstrap does not lead to any advantage over using a normal
distribution assumption. 
In the heavy-tailed regime, both Hwang and Fieller fail. This is the
case because both of them do not achieve good coverage probabilities
for the projected one-dimensional random variables $T_{\hat\rho,\hat
  C}(\hat\mu)$ in the first place. 
Here the geometric method has a big advantage over the
other two methods, because instead of considering projections in
arbitrary directions we only have to deal with projections on the
coordinate axes.  The fact that the coverage of the one-dimensional
confidence sets on the projections is an important indicator for the quality of the
confidence set for the ratio can also observed from the fact that the
coverage of 0.70 achieved by
Geometric(Hall) (Figure~\ref{fig-simulations-coverage}, rows three and
four) is in accordance with values reported by
\citet{RomWol99} 
 for the coverage of confidence sets for the mean of 
Pareto distributions. \\

{\em Number of bounded confidence sets. }   In Figure
\ref{fig-simulations-bounded} we compare the number of bounded
confidence sets for the three methods. Often, those numbers do not
differ too much across the different methods. In some cases, 
Geometric(equal-tailed)  performs favorably in that it has more bounded
confidence sets than the other methods (see supplement for more
figures). 
In the asymmetric heavy-tailed case it can be seen that when using symmetric
rather than equal-tailed confidence sets in the geometric method, the number of bounded
confidence sets decreases heavily (compare third and fourth row of
Figure \ref{fig-simulations-bounded}). This is
due to the fact that the one-dimensional confidence sets then become very large in
both directions (whereas the equal-tailed ones are only large in one
direction). Hence, the origin is contained in the resulting rectangle much more
often, which then leads to unbounded confidence sets. This strongly speaks in
favor of using equal-tailed bootstrap confidence sets rather than
symmetric ones in the geometric
method.  
Note that for Hwang's method, using
equal-tailed confidence sets can lead to implausible confidence sets
which are unbounded on one side, but bounded on the other side (as
explained above). In our
experiments, such confidence sets indeed did occur, but not very often
(about 20 times out of 1000 repetitions). \\

{\em Summary. }  The geometric approach to confidence sets for ratios
shows that confidence sets for ratios can be derived from 
one-dimensional confidence sets for the mean of projections of
$(X,Y)$. Of course, the quality of the ratio confidence sets crucially
depends on the quality of those one-dimensional confidence sets.  Based on our experiments, we would like to give the
following advice.  For distributions which are in the domain of
attraction of the normal law, we recommend to use Fieller's confidence
set instead of using any bootstrap method.  Here, Fieller's set works
fine even for small sample size and in asymmetric
distributions.  %
Hwang's set achieves
comparable results in terms of absolute deviation, but as opposed to
Fieller's sets its deviations tend to be to the liberal side, which
should be avoided in our opinion. For asymmetric heavy-tailed
distributions we recommend to use our Geometric(equal-tailed, Hall)
method.  This method can be seen as a natural generalization of the
geometric interpretation of the Fieller method to a bootstrap
scenario.  Even though it does not work perfect, its coverage
outperforms Fieller's and Hwang's methods by a large margin, and the
number of bounded confidence sets is often higher than for Fieller or
Hwang. The performance of the geometric method of course depends on
the performance of the bootstrap method used for the one-dimensional
distributions. If one is able to improve the bootstrap intervals for
the mean of those distributions, one is very likely to further improve
the coverage of
the geometric confidence sets for the ratio. \\

\par\noindent {\normalfont \Large\bfseries Acknowledgments\\}
We would like to thank Volker Guiard for helpful comments on an earlier version of this manuscript.

\bibliography{general_bib,lit-vf,lit-general,lit-ratio-users,ules_publications}

\vskip .65cm

\noindent
Ulrike von Luxburg, 
Max Planck Institute for Biological Cybernetics, 
Spemannstr.~38, 
72076~T{\"u}bingen, 
Germany; 
Phone: +49 7071 601 540; 
Fax:  +49 7071 601 552
\vskip 2pt
\noindent
E-mail: ulrike.luxburg@tuebingen.mpg.de\\

\noindent
Volker H. Franz, Justus-Liebig-Univerit{\"a}t Giessen, FB06, 
Abt. Allgemeine Psychologie, Otto-Behaghel-Str.~10F, 35394~Giessen, Germany
\vskip 2pt
\noindent
E-mail: volker.franz@psychol.uni-giessen.de
\vskip .3cm

\end{document}